\newtheorem{theorem}{Theorem}
\newtheorem{lemma}{Lemma}
\title{	\Large Energy-efficient Wireless Charging and Computation Offloading In MEC Systems\\ 
}
\author{Rafia Malik and Mai Vu\\
\small Department of Electrical and Computer Engineering, Tufts University, MA, USA\\
Email: rafia.malik@tufts.edu, mai.vu@tufts.edu} 
\date{\normalsize{May 21, 2018}} 
\begin{document}
\maketitle 

\begin{abstract}
Wireless charging coupled with computation offloading in edge networks offers a promising solution for realizing power-hungry and computation intensive applications on user devices. We consider a mutil-access edge computing (MEC) system with collocated MEC servers and base-stations/access points (BS/AP) supporting multiple users requesting data computation and wireless charging. We propose an integrated solution with computation offloading to satisfy the largest proportion of requested wireless charging while keeping the energy consumption at the minimum subject to the MEC-AP transmit power and latency constraints. We propose a novel algorithm to perform data partitioning, time allocation, transmit power control and design the optimal energy beamforming for wireless charging. Our resource allocation scheme offers an energy minimizing solution compared to other schemes while also delivering higher amount of transferred charge to the users.
\end{abstract}
\begin{IEEEkeywords}
Edge computing, MEC, wireless power transfer, energy efficient network, optimization
\end{IEEEkeywords}

\section{Introduction}
In recent years, there has been a significant rise in the number of connected devices, coupled with a rampant growth of wireless networks. The large number of connected devices has led to an evolution of wireless communication networks towards dense deployments with an exponential growth in wireless traffic. The increased data traffic and the use of applications requiring high data rate on cell phones has led to an escalated energy demand. Future generation networks including 5G and beyond are expected to handle this multiple folds increase of data traffic at stringent latency requirements with a need for energy conservation~\cite{Krikidis2014}. Therefore, in addition to performance parameters like throughput, coverage and latency, energy efficiency can be considered as a figure of merit in the design of next generation wireless networks.

\begin{figure}[t]
\centering
\includegraphics[scale = 0.6]{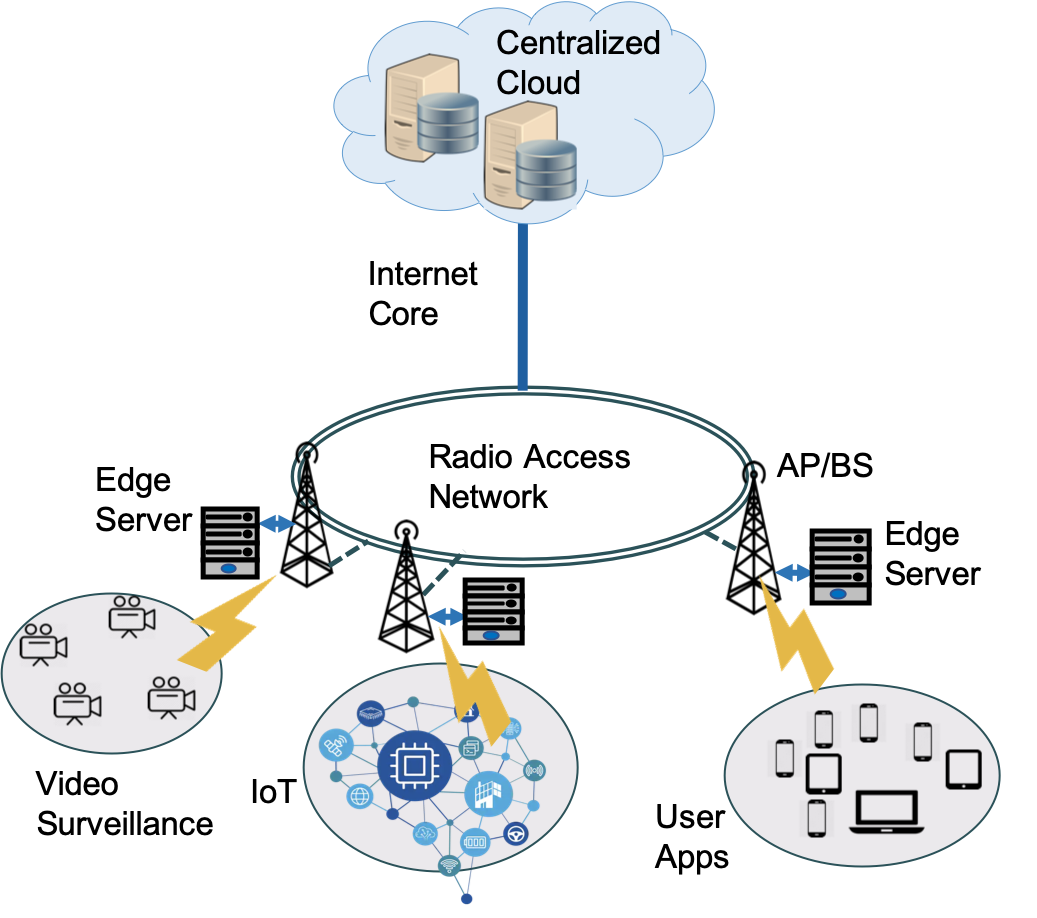}
\caption{Architecture for an edge computing system providing computation offloading and wireless charging to connected users}
\label{mec_arch}
\end{figure}
Regardless of expected improvements in Quality of Service (QoS) and resilience in future networks like 5G, huge volumes of video traffic will continue to present significant challenges for mobile network operators~\cite{Nightingale2018}. To reach the desired improvement over existing cellular networks, and to implement key use-cases featuring high data-rates and low latency, Multi-access Edge Computing (MEC) is a promising technology which can provide cloud-computing capabilities within the radio access network in close vicinity to mobile subscribers~\cite{MEC2014}. Figure \ref{mec_arch} shows a general architecture for an edge computing network. There can be several scenarios under which computation offloading occurs, for instance video surveillance cameras offloading to the edge, or IoT devices or applications like AR/VR offloading their computation intensive tasks to the MEC servers. These servers can be co-located with radio base stations connected via backhaul to the internet core which is connected to the centralized cloud \cite{MEC2018}. By moving the computing features to the edge, MEC can offer a distributed and decentralized service environment characterized by proximity, low latency, and high rate access~\cite{Doppler2018}\cite{Cau2016}. Currently, ETSI industry specification group is the only international standard available for MEC in the technology field, however, the 3rd Generation Partnership Project (3GPP) has started to include MEC in the 5G network standardization \cite{Intel2019}.

Radio frequency (RF) energy harvesting has lately garnered significant interest for communication systems with the prospect of far-field wireless power transfer which can enable energy-constrained devices to replenish their charge levels without physical connections. Energy harvesting is projected to be the next billion dollar market for semiconductors~\cite{Semico2016}. Progress in this area includes global coalition initiatives such as Airfuel Alliance for building a global, interoperable wireless power ecosystem \cite{Airfuel} as well as commercial products, for example, Ossia's Cota technology which uses RF power to charge dozens of mobile devices within a several meter radius and their transmitters come in multiple form factors for instance as a ceiling tile~\cite{Ossia}. There is also active research in RF power transfer ranging from signal design to maximize energy harvesting potential \cite{Bruno2019} to application centric research for using UAVs for wireless charging \cite{Rui2019}.

The availability of Ultra-High-Definition portable consumer devices and AR/VR applications fuels the growth of mobile video traffic, however, the limited battery lifetime of these devices poses a hindrance to the deployment of such power-hungry designs and computation intensive features~\cite{Barbar2015}. To this end, the synergy between edge computing and wireless power transfer has the potential to provide battery sustainability and to alleviate the computation load. Dense deployments of multiple base-stations with co-located MEC servers \cite{MEC2018} in close proximity to connected users can warrant the practicality of wireless charging and offer high access rates and computation capabilities. Prior works have considered wireless charging in MEC systems under different implementations, for instance, wireless charging in cooperation assisted edge computing~\cite{Hu2018}, UAV-enabled mobile edge computing~\cite{Chu2018} and MEC based heterogeneous networks~\cite{Ji2018}. Wireless power transfer has been considered in MEC networks for \textit{self-sustained} devices, which rely on wireless charging as their sole power source, in relay-aided edge systems~\cite{Hu2018}, single user~\cite{Chae2016} and multiple user systems~\cite{Wang2018}. Different from the concept of self-sustained devices which typically have low power requirements and/or low receiver sensitivity, \textit{on-request} wireless charging can be more widely applicable where user devices use wireless charging to replenish their batteries. In the case of cellular networks, providing charging may can be a billable service assuming that the ground users have knowledge of their battery state, and can inform the MEC-AP about their battery level for requesting recharge in cases where their battery is critically low.

Computation offloading to the edge has been studied under two data models;, binary offloading where the task is completely offloaded to the MEC for computation, or kept entirely at the user end for local computation, and partial offloading where the task can be disintegrated such that some of it is offloaded to the MEC and the remaining is computed locally. Modern mobile applications are composed of numerous procedures, for example, an AR/VR application can have multiple computation components such as video rendering, mapping and tracking, object recognition, etc, which makes it possible to implement fine-grained (partial) computation offloading. While partial offloading is more complicated to implement compared to the traditional binary offloading scheme, however, it is more realistic with possible implementation in practical edge computing systems and has immense benefits in terms of energy consumption as shown in \cite{Malik2020}.

In this work, we consider a multi-cell multi-user network scenario where access points equipped with massive MIMO antenna arrays and with co-located mobile edge computing servers offer computation offloading and wireless charging. We consider integrated computation offloading and wireless charging with the aim to minimize the transmitted energy consumption while ensuring that the received energy is the largest feasible proportion of the requested energy. This is different from our previous work in [Malik-maxcharging] which proposes an opportunistic wireless charging scheme designed to maximize the received (charged) energy at the user end under the latency and transmit power constraints, but does not aim at minimizing the energy consumption for wireless charging. The opportunistic wireless charging in [Malik-maxcharging] is achieved by formulating a sequential problem where the computation offloading resources are optimized independent of the wireless charging resources. Results show that while opportunistic schemes can yield higher wireless charging capability, it is at the cost of higher charging energy. In this paper, we formulate an integrated problem for joint resource allocation of data computation, communication and wireless charging resources with the aim of minimizing the overall system's energy consumption.

\subsection*{Major Contributions}
\begin{enumerate}[leftmargin=*]
\item We formulate a novel comprehensive system-level energy minimization problem in a multi-cell scenario where each cell has an MEC server which provides wireless charging and computation offloading services simultaneously for multiple users. While previous works have considered MEC problems, most consider energy consumption only at the AP~\cite{Wang2018,Chu2018} or the user end~\cite{Scutari2015,Chae2016,You2017}. On the other hand, our formulation minimizes a weighted sum of the energy consumption at all users in each cell and its MEC server and therefore includes the energy optimization at either the users or the MEC side alone as its special cases. This is the first work to design an integrated system-level problem for joint resource allocation of data transmission, computation and wireless charging resources with constraints on latency and power.

\item Another key novelty of our work is the all-inclusive set of optimizing variables considered in the problem. Our formulation optimizes for the time durations for offloading, downloading and computation, transmit power allocation at both users and APs, and the optimal split of data to be offloaded and to be computed locally at each user. Most prior works consider binary offloading~\cite{Guo2018,Leng2016,Letaief2017,Sengul2017} and no optimization for downloading and/or computation time~\cite{Chae2016,You2017,Bi2018,Wang2018}.

\item We design a novel solution approach to solve the complex non-convex integrated energy optimization problem and propose a nested algorithm architecture for optimal resource allocation for both computation offloading as well as wireless charging. We disintegrate the problem into equivalent subproblems and propose a customized algorithm architecture and implementation which is unique. It includes an outer latency-aware descent algorithm which solves for optimal data partitioning, and inner primal-dual algorithm which jointly optimizes the time allocation and energy beamforming matrices. 
\end{enumerate}

\subsubsection*{Notation} $\boldsymbol{X}$ and $\boldsymbol{x}$ denote a matrix and vector respectively, $\nabla^2 f(x)$ denotes the Hessian matrix, and $\nabla^2 f(x)^{-1}$ denotes its inverse. For an arbitrary size matrix, $\boldsymbol{Y}$,  $\boldsymbol{Y}^\ast$ denotes the Hermitian transpose, and $\textbf{diag}(y_1,...,y_N)$ denotes an $N\times N$ diagonal matrix with diagonal elements $y_1,...,y_N$. $\boldsymbol{I}$ denotes an identity matrix, and $\boldsymbol{0, 1}$ denote an all zeros and all ones vector respectively. The standard circularly symmetric complex Gaussian distribution is denoted by $\mathcal{CN}(\boldsymbol{0}, \boldsymbol{I})$, with mean $\boldsymbol{0}$ and covariance matrix $\boldsymbol{I}$. $\mathbb{C}^{k \times l}$ and $\mathbb{R}^{k \times l}$ denote the space of $k \times l$ matrices with complex and real entries, respectively.

\section{System Model}\label{sys_model}
We consider a system where $L \geq 1$ Access Points (APs), each co-located with an MEC Server, are deployed over a targeted zone/area, for instance in a sports stadium or a town fair, serving ground users with computation offloading and power transfer. Each AP is equipped with a massive antenna array with $N$ antennas while the user-devices are equipped with single antennas. These APs wirelessly charge (upon request) ground users in downlink, collect offloaded data from the users in uplink, and deliver computed results to users in downlink~\cite{MEC2014}. We consider $K$ users requesting wireless charging service and sending data for computation offloading to each MEC-AP. 

For computation offloading at each MEC, we consider the simple \textit{data-partition model}, where the task-input bits are bit-wise independent and can therefore be arbitrarily divided into different groups to be executed by different entities~\cite{Mao2017}. We consider the case of partial offloading, such that for the $ith$ user, the $u_i$ computation bits are partitioned into $q_i$ and $s_i$ bits, where $q_i$ bits are computed locally and $s_i$ bits are offloaded to the MEC server. Assuming that such partition at the user-terminal does not incur additional computation bits, then $u_i = q_i + s_i$. 

\begin{figure}[t]
\centering
\includegraphics[scale = 0.6]{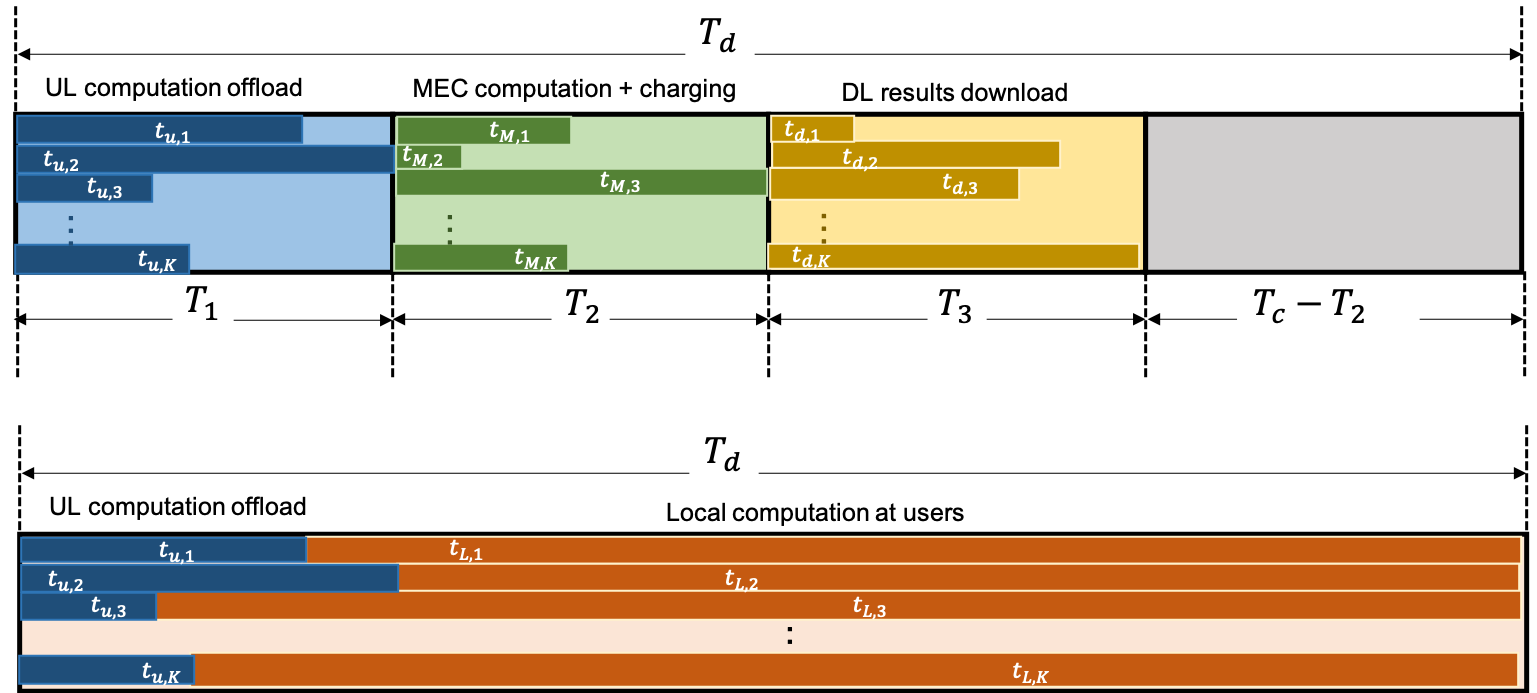}
\caption{System functions at the MEC and user end that take place within the latency constraint $T_d$}
\label{Td_DL_UL}
\end{figure}
\textcolor{blue}{Energy at the user terminal is consumed for two tasks; 1) for local computation which depends on the CPU frequency used, and 2) for transmitting the data for computation offloading to the serving MEC-AP in the uplink which depends on the transmission time and power. Energy at the MEC server is consumed for three tasks; 1) for data computation of offloaded tasks, 2) for transmitting the results of computed data to its users in the downlink, and 3) for wireless charging in the downlink to the users requesting energy. Consider the case where wireless charging is requested jointly with computation offloading. Given a latency constraint of $T_d$, the time span for data offloading, computation at both the users and the MEC ends, wireless charging, and delivery of computed results to the user should not exceed $T_d$. From the MEC-AP's perspective, the time duration for data offloading from all users to the MEC is denoted by $T_1$, the time for wireless charging is denoted as $T_c$, the computation for offloaded data at the MEC spans duration $T_2$, and the transmission of processed results occupies time $T_3$, such that $\sum_{j=1}^3 T_j \leq T_d$. Note that wireless charging can happen concurrently with data computation at the MEC, and can continue after the results have been delivered to the users in downlink. From the user's side, the time taken for data offloading by the $i^{\text{th}}$ user, $t_{u,i}$, and the time taken for local computation of any remaining data for this user, $t_{L,i}$ should meet the latency constraint, such that $t_{u,i}+t_{L,i} \leq T_d$. Figure \ref{Td_DL_UL} shows these system functions from both the users' and the MEC's perspective.}

\subsection{Wireless Charging}
In each cell, we consider $K$ user terminals requesting wireless charging from the MEC-AP, where the $i^{\text{th}}$ user requests $e_i$ mJ of energy. To cater for the energy requests from the multiple users, the massive-MIMO enabled MEC-AP employs transmit energy beamforming in downlink to simultaneously charge multiple users as shown in Figure \ref{energybeam}. 
\begin{figure}[t]
\centering
\includegraphics[scale = 0.6]{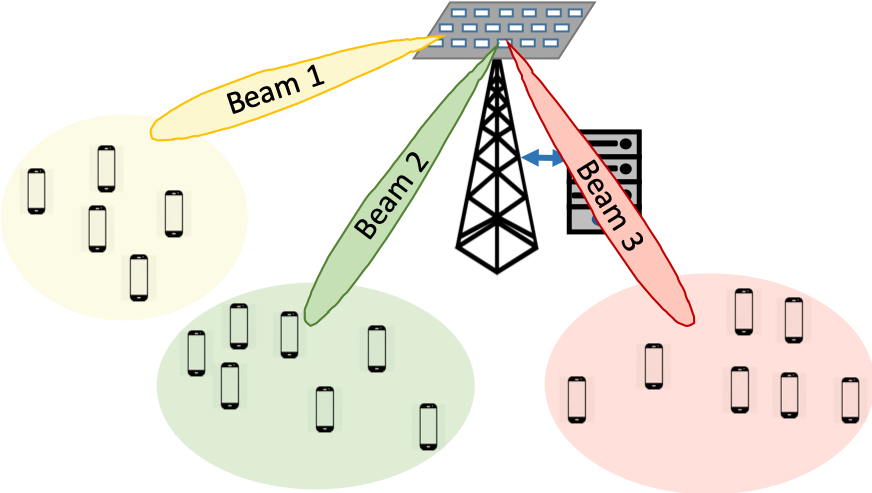}
\caption{MEC-AP wirelessly charging multiple users simultaneously through energy beamforming}
\label{energybeam}
\end{figure}

Let $\boldsymbol{x_q}$ denote the energy bearing signal from the AP to the user-terminal (UT), $\boldsymbol{W_q} \triangleq \mathbb {E}\Big[\|\boldsymbol{x_q}\|^2\Big]$ denote the transmit covariance matrix, and $P_c = \text{tr}(\boldsymbol{W_q})$ be the power transmitted from the AP for wireless charging, in short, the charging power. Then the received (charged) power at the $i^{\text{th}}$ user is given as
\begin{equation}
P_{h,i} = \xi \mathbb{E} \left [ \left | \boldsymbol{h_i^\ast x_q}\right |^2\right ] = \xi \text{tr} (\boldsymbol{h_i^\ast W_q h_i})
\end{equation}
where $0 \leq \xi \leq 1$ is the energy conversion efficiency from Radio Frequency (RF) to Direct Current (DC), $\boldsymbol{h_i} \in \mathbb{C}^{N \times 1}$ is the channel from the AP to the $i^{\text{th}}$ user.  We define $T_c$ as the time duration for wireless charging, where $T_c = T - (T_1 + T_3)$ and includes the time consumed by the computation phase, over which power is transferred to the users alongside computation at the users and at the MEC server. The energy consumed at the MEC server for power transfer, in short the charging energy, is given by
\begin{equation}\label{Ec}
E_c = T_c \text{tr}(\boldsymbol{W_q})
\end{equation}

For the $i^{\text{th}}$ user requesting $e_i$ amount of energy, the received (charged) energy, $E_{h,i}$, is given as
\begin{equation}\label{cond_energy}
E_{h,i} = P_{h,i} T_c = \xi T_c \text{tr} (\boldsymbol{h_i^\ast W_q h_i}) \geq \alpha_i e_i
\end{equation}
Here $0 \leq \alpha_i \leq 1 \ \forall i\in[1,K]$ is defined as an \textit{energy ratio} auxiliary variable to ensure that the received energy is proportional to the requested amount such that only a portion of the requested energy may be charged if it is unfeasible for the AP to satisfy the user's energy request completely due to poor channel conditions or high energy request(s) by a single or few users. The ratio variable $\alpha_i$ therefore serves several reasons: to avoid over charging users' batteries, to conserve energy and spending since charging is a billable service, and also to ensure fairness among users so that no single user gets an unfairly large amount of the charged energy at the expense of others. 
\subsection{Data Transmission}
\subsubsection{Offloading Data in Uplink}
In a given time slot, $K$ single-antenna user terminals simultaneously offload to the $N$ antenna AP. We consider $N \gg K$ such that the throughput becomes independent of the small-scale fading with channel hardening~\cite{Ngo2017}. The very large signal vector dimension at a massive MIMO AP enables the use of linear detectors such as maximum ratio combining (MRC), in which case the uplink net achievable transmission rate for the $i^{th}$ user in the $l^{th}$ cell, $r_{u,i}$, is given as~\cite{Marzetta2016}
\begin{equation}\label{rate_ul}
r_{u,i} = \nu \log_2 \left ( 1 + \frac{\text{SINR}_{li}^{ul}}{\Gamma_{1}} \right ), \ \text{SINR}_{li}^{ul} = \frac{N \gamma_{li}^l p_{li}}{\sigma_{1,li}^2}
\end{equation}
where $\Gamma_{1} \geq 1$ accounts for the capacity gap due to practical coding schemes, $\gamma_{li}$ is the mean-square channel estimate, and $p_{li}$ is the transmit power of the $i^{\text{th}}$ user in the $l^{\text{th}}$ cell.  The constant $\nu$ represents the portion of transmission symbols spent on data transfer in the coherence interval $\tau_c$. The interference and noise power, $\sigma_{1,li}^2$, includes the receiver noise variance, interference due to channel estimation and from contaminating cells, and inter-cell interference as defined in~\cite[Eq. 4.18]{Marzetta2016}.

The energy consumed for offloading the $i^{th}$ user's data is given by $E_{OFF,i} = p_i t_{u,i}$, where $p_i$ is the transmit power and $t_{u,i}$ is the transmission time for the $i^{th}$ user. Let $B$ denote the channel bandwidth, then $t_{u,i} = \frac{s_i}{B r_{u,i}}$. All users offload their computation bits simultaneously, and the total energy and time overhead for simultaneous data offloading is given as
\begin{equation}\label{E_ul}
E_{OFF} = \sum_{i=1}^{K} \frac{p_i s_i} {B r_{u,i}}, \ T_1 = \max_{i \in [1,K]} t_{u,i}.
\end{equation}

\subsubsection{Downloading Results in Downlink}
For the $i^{th}$ user in the $l^{th}$ cell, the downlink transmission rate with maximum ratio linear precoding at the MEC-AP is given as~\cite{Marzetta2016}
\begin{equation}\label{rate_dl}
r_{d,i} = \log_2 \left ( 1 + \frac{\text{SINR}_{li}^{dl}}{\Gamma_{2}} \right ), \ \text{SINR}_{li}^{dl} = \frac{N P \gamma_{li}^l \eta_{lk}}{\sigma_{2,li}^2}
\end{equation}
where $\Gamma_{2} \geq 1$ is the capacity gap, and $\sigma_{2,li}^2$ is the interference and noise power which also contains pilot contamination and intercell interference as given in~\cite[Eq. 4.34]{Marzetta2016}.

The transmission time for delivering the $i^{th}$ user's computation results can be written in terms of the downlink rate in (\ref{rate_dl}) as $t_{d,i} = \frac{\tilde{s}_i}{B r_{d,i}}$. Here $\tilde{s}_i$ denotes the number of information bits generated after processing $s_i$ offloaded bits of the $i^{th}$ user, and is assumed to be proportional to $s_i$, that is $\tilde{s}_i = \mu s_i$.  The AP simultaneously transmits computed results for all users, and the total energy and time overhead for results downloading are then given as
\begin{equation}\label{E_dl}
E_{DL} = \sum_{i=1}^{K} \frac{P \eta_i \mu s_i}{B r_{d,i}}, \ T_3 = \max_{i \in [1,K]} t_{d,i}.
\end{equation}

\subsection{Data Computation}
\subsubsection{Local computation at the users}
The time for computation depends on the amount of data to be computed and the CPU cycle frequency. The energy consumption and the processing time for local computation at the $i^{th}$ user is given as~\cite{Mao2017} 
\begin{align}\label{t_Li}
&E_{LC} = \sum_{i=1}^{K} \kappa_i c_i (u_i - s_i) f_{u,i}^2, \ \ t_{L,i} = \frac{c_i (u_i - s_i)}{f_{u,i}}
\end{align}
where $\kappa_i$ is the effective switched capacitance, $f_{u,i}$ denotes the average CPU frequency, $c_i$ denotes the CPU cycle information, and $q_i = u_i - s_i$ is the total number of bits required to be locally computed at $i^{th}$ user respectively. The users' local computation time can also extend to Phase III while the MEC is sending computed results back to users. This fact is considered later in the problem formulations.

\subsubsection{Computation of the offloaded data at the MEC server}
MEC servers, with high computation capacities, compute the tasks of all users in parallel~\cite{Taleb2017}\cite{Mao2017}. The energy and time consumed for computing offloaded bits is given as
\begin{equation}\label{tMEC}
E_{OC} = \sum_{i = 1}^{K} \kappa_m f_{mi}^2 d_m s_i, \ \ t_{M,i} = \frac{d_m s_i}{f_{mi}} \ \forall i \in [1, K], \ \ T_2 = \max\{t_{M,i}\}.
\end{equation}
where $t_{M,i}$ is the time for computing $i^{th}$ user's offloaded task, $s_i$ is the number of bits offloaded by the $i^{th}$ user to the MEC, $d_m$ is the number of CPU cycles required to compute one bit at the MEC, $f_{mi}$ is the CPU frequency assigned to the $i^{th}$ user's task, and $\kappa_m$ is the effective switched capacitance of the MEC server.

For our formulation to follow in Section \ref{formulation}, we consider equal frequency allocation for users' tasks, that is $f_{m,i} = f_m \ \forall i$, based on previous results in [ref-MEC] showing that dynamic frequency allocation has little effect on the system energy consumption since in a typical network setting, the wireless transmission energy consumption is significantly dominant compared to the computation energy consumption.

\section{Energy Minimization Problem}\label{formulation}
Considering a multi-cell multi-MEC network, we formulate an edge computing problem which explicitly accounts for physical layer parameters including available transmit powers from each user and the MEC, associated massive MIMO data rates with realistic pilot contamination and interference. For simplicity of notation, we assume that all $K$ users which are offloading their computation to the MEC server are also requesting wireless charging. 

\textcolor{blue}{The total energy consumption by all users, based on equations (\ref{E_ul}) and (\ref{t_Li}), can be written as
\begin{equation}\label{E_u}
E_u =  \sum_{i=1}^{K} \left[\frac{t_{u,i}(2^{\frac{s_i}{\nu t_{u,i}B}} - 1)\Gamma_{1}\sigma_{1,i}^2}{N \gamma_i} + \kappa_i c_i (u_i - s_i) f_{u,i}^2 \right]
\end{equation}
Similarly, the total energy consumption at the MEC server, based on equations (\ref{Ec}), (\ref{E_dl}) and (\ref{tMEC}) is
\begin{equation}\label{E_m}
E_m = \sum_{i=1}^{K} \left[\frac{t_{d,i}(2^{\frac{\mu s_i}{t_{d,i}B}} - 1)\Gamma_{2}\sigma_{2,i}^2}{N \gamma_{i}}  + \kappa_m d_m f_{mi}^2 s_i \right] + (T_d - T_1 - T_3) \text{tr}(\boldsymbol{W_q})
\end{equation}
In these expressions, using (\ref{rate_ul}) and (\ref{rate_dl}), and by definition of the uplink and downlink transmission rates as $r_{u,i} = \frac{s_i}{\nu t_{u,i} B}$ and $r_{d,i} = \frac{\mu s_i}{t_{d,i} B}$ respectively, we have implicitly replaced the power allocation variables for per-user uplink transmission power ($p_{li}$) and per-user downlink power ($\eta_{li}$) as functions of the time allocation and data partitioning as follows
\begin{align}\label{poweralloc}
p_{li} = \frac{(2^{\frac{s_i}{\nu t_{u,i} B}} - 1)\Gamma_{1}\sigma_{1,i}^2}{N \gamma_{i}}, \  \ \eta_{li} = \frac{(2^{ \frac{\mu s_i}{t_{d,i} B}} - 1)\Gamma_{2}\sigma_{2,i}^2}{P N \gamma_{i}}
\end{align}} Below we discuss an integrated formulation which jointly optimizes for the wireless charging transmit beamforming matrix, the amount of data offloaded from each user, and the time duration for each phase within a total latency requirement with aim of system level energy minimization. \textcolor{blue}{This is different from our work in [Malik-maxcharging] where we formulate two sequential problems, one with the aim of minimizing the energy consumption for computation offloading, and the other problem to maximize the energy received by the users through wireless charging. So the objective function in this formulation includes the energy consumed for wireless charging as in (\ref{E_m}) with the aim of minimizing the overall energy consumption, however, for the sequential formulation in [Malik-maxCharging], in terms of the wireless charging aspect, the objective is to maximize the total energy received by all users, $\sum_{i=1}^K E_{h,i}$.}

\setcounter{equation}{13}
\begin{align*}\label{Pnew}
(P_\text{int}): \ \  &\min_{\boldsymbol{s, t, W_q}} E_{\text{total}} = (1 - w) E_{u} + w E_{m}  \tag{\theequation}&\\
&\textcolor{blue}{\text{ s.t. }  \ \ \  \text{Eqs. } (\ref{E_u})-(\ref{E_m})} \tag{a-b}&\\
&\sum_{j=1}^{3} \left ( T_j\right ) \leq T_d, \ \ \ \ \frac{c_i (u_i - s_i)}{f_{u,i}} + t_{u,i} - T_d \leq 0 \ \ \ \  \ \ \ \  \forall i \in [1,K] \tag{c-d}&\\
&t_{u,i} - T_1 \leq 0, \ \ \ \  \ t_{d,i} - T_3 \leq 0, \ \ \ \ \frac{d_m s_i}{f_{mi}} - T_2 \leq 0 \ \ \ \forall i \in [1,K] \tag{e-g}&\\
&T_c = T_d - T_1 - T_3 \tag{h}&\\
&\text{tr} (\boldsymbol{W_q}) - P\leq 0, \ \ \ \  \xi \text{tr} (\boldsymbol{h_i^\ast W_q h_i})T_c - \alpha_i e_i  \geq 0 \tag{i-j}&
\end{align*}
Here $E_{\text{total}}$ is weighted sum of energy consumed at all users ($E_{u}$) and the MEC ($E_{m}$), with $1 - w$ and $w$ as the respective weights. The optimizing variables of this problems are time allocation $\boldsymbol{t} = [t_{u,1}...t_{u,K}, t_{d,1}...t_{d,K}, T_1, T_2, T_3, T_c]$, offloaded data $\boldsymbol{s} = [s_1...s_K]$, and beamforming matrix for wireless charging $\boldsymbol{W_q} \in \mathbb{R}^{N \times N}$. Given parameters of the problems are $T_d$ as the total latency constraint, $P$ as the AP's transmit power, $B$ as the channel bandwidth, $\Gamma_1$, $\Gamma_2$ as the uplink and downlink capacity gaps, $(\kappa_i, c_i)$ and $(\kappa_m, d_m)$ as the switched capacitance and CPU cycle information at the users and the MEC respectively.

Here $T_d$ is the total latency constraint, and (c) represents the constraint that both the time consumed for all three phases at the MEC, and the time consumed for offloading $\boldsymbol{t_u}$ and local computation at each user $\boldsymbol{t_L}$ should not exceed $T_d$. Constraints (e-g) show that the time consumed separately for offloading $\boldsymbol{t_u}$, computation of users' tasks at the MEC $\boldsymbol{t_M}$, and downloading time $\boldsymbol{t_d}$ for each user's results must be less than the maximum allowable time, $\{ T_1, T_2, T_3\} $, for that phase as given in \{(\ref{E_ul}),(\ref{tMEC}), (\ref{E_dl})\} respectively. Constraint (h) denotes that wireless charging occupies all the time within $T_d$ outside the data transmission operations. Constraint (i) represents the physical layer constraint on the maximum transmission power of the AP. Constraint (j) shows that the amount of received (charged) energy at the $i^{\text{th}}$ user is proportional to the amount of requested energy depending on the availability of the system, where the proportional factor $\alpha_i \ (0 \leq \alpha_i \leq 1)$ is an auxiliary variable which ensures feasibility for cases when the actual charged energy is less than the requested amount if the available time or MEC-AP power for wireless charging cannot satisfy the full requested energy amount. We are interested in the largest values of $\alpha_i$ for which the problem is feasible, as such these $\alpha_i$ values will be optimized separately. \textcolor{blue}{Note that constraints (h-j) are not included in the energy minimization problem in [Malik-maxCharging] where the beamforming optimization follows the energy minimization for computation offloading, as a separate formulation.}

\subsection{Problem Analysis and Decomposition}
In this section we analyze the integrated problem $(P_{\text{int}})$ and show that they can be decomposed into simpler problems. The multivariable problem in (\ref{Pnew}) is a non-linear and non-convex optimization problem. Following a similar approach as in [ref-MEC], the objective function $f_0$ for $(P_{\text{int}})$ is a convex function of $s_i$. Furthermore, provided that the gradient of $f_0(\cdot)$ with respect to $s_i$ evaluated at $s_i = 0$ is positive, which is often satisfied in typical network settings, then the total energy in problem $(P_{\text{int}})$ is an increasing function of each $s_i$ and there exists an optimal point, $s_i^\star \ \forall i \in [1,K]$, which minimizes $E_{\text{total}}$ within the latency constraint. If offloaded data $\boldsymbol{s}$ is fixed, then problem $(P_{\text{int}})$ turns out to be convex in the remaining variables as stated in the following lemma. Lemma \ref{lemma1} lets us decompose the original non-convex problem $(P_{\text{int}})$ into simpler convex subproblems which will be used in the subsequent algorithm design. 
\begin{lemma}\label{lemma1}
For a given set of offloaded data $\boldsymbol{s}$, the problem $(P_{\text{\emph{int}}})$ is convex in the remaining variables $\boldsymbol{t, W_q}$.
\end{lemma}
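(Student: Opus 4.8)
The plan is to verify, term by term, that once the offloaded data $\boldsymbol{s}$ is held fixed the objective $E_{\text{total}}$ is a convex function and every constraint describes a convex set, after which convexity of $(P_{\text{int}})$ follows immediately. First I would dispose of the computation contributions: with $\boldsymbol{s}$ fixed, and the CPU frequencies $f_{u,i}$ and $f_{mi}=f_m$ treated as fixed parameters (as specified in Section~\ref{sys_model}), the local-computation energy $\kappa_i c_i(u_i-s_i)f_{u,i}^2$ and the MEC-computation energy $\kappa_m d_m f_{mi}^2 s_i$ in (\ref{E_u})--(\ref{E_m}) are constants, and the latency constraints (d) and (g) collapse to affine inequalities in $t_{u,i}$ and $T_2$. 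Likewise (c), (e), (f), (h) are affine in the time variables, (i) is affine in $\boldsymbol{W_q}$, and the implicit positive-semidefiniteness constraint $\boldsymbol{W_q}\succeq 0$ (since $\boldsymbol{W_q}$ is a transmit covariance matrix) defines a convex cone; all of these bound convex sets.

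Next I would treat the two transmission-energy terms, the only nonlinear pieces of the objective that survive once $\boldsymbol{s}$ is fixed. Writing the uplink term as $\tfrac{\Gamma_1\sigma_{1,i}^2}{N\gamma_i}\, t_{u,i}\big(2^{a/t_{u,i}}-1\big)$ with constant $a=s_i/(\nu B)>0$, I recognize $t\mapsto t\,(2^{a/t}-1)$ as a scaling of the perspective of the convex function $2^{x}-1$, hence convex on $t>0$; a one-line check confirms this, since with $b=a\ln 2$ the second derivative of $t\,2^{a/t}$ equals $b^2 e^{b/t}/t^3>0$ and the remaining $-t$ is linear. The downlink term $t_{d,i}\big(2^{\mu s_i/(t_{d,i}B)}-1\big)$ is convex by the identical argument.

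The main obstacle is the charging term, which is genuinely bilinear: in $E_m$ the summand $(T_d-T_1-T_3)\,\text{tr}(\boldsymbol{W_q})=T_c\,\text{tr}(\boldsymbol{W_q})$ is a product of an affine function of the time variables and a linear function of $\boldsymbol{W_q}$, and constraint (j), $\xi\,\text{tr}(\boldsymbol{h_i^\ast W_q h_i})\,T_c\ge\alpha_i e_i$, carries the same product $T_c\boldsymbol{W_q}$. Such a product has an indefinite Hessian and is neither convex nor concave, so I cannot argue convexity directly in $(\boldsymbol{t},\boldsymbol{W_q})$. The resolution I would use is the standard change of variable $\boldsymbol{S}\triangleq T_c\,\boldsymbol{W_q}$, i.e.\ optimizing over the transferred-energy matrix rather than the power matrix. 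Since $T_c>0$ in any feasible charging solution, the map $(\boldsymbol{t},\boldsymbol{W_q})\mapsto(\boldsymbol{t},\boldsymbol{S})$ is a bijection of the feasible domain, and under it the charging energy becomes $\text{tr}(\boldsymbol{S})$ (linear), constraint (j) becomes $\xi\,\text{tr}(\boldsymbol{h_i^\ast S h_i})\ge\alpha_i e_i$ (linear), the power cap (i) becomes $\text{tr}(\boldsymbol{S})\le P\,T_c$ (affine in $(\boldsymbol{S},T_c)$), and $\boldsymbol{W_q}\succeq 0$ becomes $\boldsymbol{S}\succeq 0$.

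Finally I would assemble the pieces: the objective is a nonnegative weighted sum of convex transmission terms, constant computation terms, and the linear term $w\,\text{tr}(\boldsymbol{S})$, hence convex, while every constraint defines a convex set. Therefore $(P_{\text{int}})$ with $\boldsymbol{s}$ fixed is a convex program in $(\boldsymbol{t},\boldsymbol{S})$, equivalently in $(\boldsymbol{t},\boldsymbol{W_q})$, which is precisely the assertion of Lemma~\ref{lemma1}. I expect the only genuine subtleties to be justifying the $T_c>0$ bijection that validates the change of variable and confirming that the perspective/second-derivative convexity of the energy terms holds uniformly over the feasible time interval.
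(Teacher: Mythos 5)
Your proposal is correct in substance but takes a genuinely different route from the paper's Appendix~A proof, and the difference is instructive. You and the paper agree on the diagnosis: with $\boldsymbol{s}$ fixed, the transmission-energy terms $t\,(2^{a/t}-1)$ are convex (your perspective-function/second-derivative argument is exactly what the paper defers to an earlier reference), the purely linear and affine pieces are harmless, and the only obstruction is the bilinear coupling $T_c\,\text{tr}(\boldsymbol{W_q})$ and its counterpart in constraint (j), whose Hessian is indefinite. Where you diverge is the resolution. The paper keeps the original variables and argues that the offending functions, e.g.\ $-T_1\text{tr}(\boldsymbol{W_q})$ and $-T_c\,\text{tr}(\boldsymbol{h_i^\ast W_q h_i})$, are \emph{quasiconvex} (their sublevel sets are the convex regions above hyperbolas on the positive orthant, in the spirit of Boyd's Example 3.31), and then concludes the problem is convex because every constraint is ``convex or has convex level sets.'' You instead perform the change of variables $\boldsymbol{S}=T_c\boldsymbol{W_q}$, under which the charging energy, constraint (j), and the power cap (i) become linear or affine, yielding a bona fide convex program. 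Your route buys more than the paper's: quasiconvexity of constraint functions does give a convex feasible set, but the bilinear term also appears in the \emph{objective} (inside $E_m$), and a sum of convex and quasiconvex functions need not be quasiconvex, so the paper's argument does not actually establish convexity of the objective in $(\boldsymbol{t},\boldsymbol{W_q})$; your reformulation sidesteps this cleanly and is the standard device in the wireless-power-transfer literature.

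Two caveats you should make explicit. First, what you prove is that $(P_{\text{int}})$ at fixed $\boldsymbol{s}$ is \emph{equivalent} to a convex program in $(\boldsymbol{t},\boldsymbol{S})$, not that it is convex in $(\boldsymbol{t},\boldsymbol{W_q})$ literally: convexity is not invariant under the nonlinear bijection $(\boldsymbol{t},\boldsymbol{W_q})\mapsto(\boldsymbol{t},T_c\boldsymbol{W_q})$, and the indefinite Hessian you computed shows the statement of Lemma~\ref{lemma1}, read literally, is at best loose (the paper's own proof has the same discrepancy, concluding convexity from quasiconvexity). Second, handle the boundary case $T_c=0$, where the bijection degenerates; there constraint (j) forces $\alpha_i e_i = 0$, charging is vacuous, and the problem reduces to the convex time-allocation problem alone. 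With those remarks added, your argument is sound, and for the purpose the lemma serves downstream (the separation in Lemma~\ref{sep_prob} and the primal-dual inner algorithms, which fix $T_c$ when solving for $\boldsymbol{W_q}$), it delivers the same usable conclusion.
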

\begin{proof}
Proof follows by examining each constraint and showing that with fixed $s_i$, it is a convex function. Details in Appendix A.
\end{proof}

Since CPU frequencies are not optimizing variables, for a given value of the offloaded data $s_i$, the computation time for the offloaded data $T_2$ can be pre-determined in closed form directly from constraint (g) in (\ref{Pnew}) and hence constraint (\ref{Pnew}g) can be excluded from the problem $(P_{\text{int}})$. Also, at a fixed value of $\boldsymbol{s}$, considering wireless charging as an opportunistic feature in addition to computation offloading, $(P_{\text{int}})$ is also separable in $\boldsymbol{t}$ and $\boldsymbol{W_q}$ as stated in the lemma below. 

\begin{lemma}\label{sep_prob}
Given that wireless charging is opportunistic, at a fixed value of $s_i$, problem $(P_{\text{int}})$ is separable in terms of variable $\boldsymbol{t}$ and $\boldsymbol{W_q}$ as follows
\begin{align}\label{Pdecomp}
&(P2) \ \ \ \ \min_{\boldsymbol{t}} \ \ \ (1-w) E_u + w E_{m1} \ \ \text{s.t.}\ \ (\ref{Pnew}c) - (\ref{Pnew}f) \\
&(P3)\ \  \ \ \min_{\boldsymbol{W_q}}  \ \ \ w E_{m2} \ \ \text{s.t.}\ \  (\ref{Pnew}h) - (\ref{Pnew}j)   
\end{align}
\end{lemma}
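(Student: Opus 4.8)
The plan is to exhibit the separation by first partitioning both the objective and the constraint set according to which variable block they involve, and then arguing that the opportunistic treatment of charging removes the only linkage between $\boldsymbol{t}$ and $\boldsymbol{W_q}$. Writing the MEC energy in (\ref{E_m}) as $E_m = E_{m1} + E_{m2}$, where $E_{m1}$ collects the downlink transmission and MEC computation energy (the summation term of (\ref{E_m})) and $E_{m2} = (T_d - T_1 - T_3)\,\text{tr}(\boldsymbol{W_q}) = T_c\,\text{tr}(\boldsymbol{W_q})$ is the charging energy of (\ref{Ec}), the objective becomes $E_{\text{total}} = [(1-w)E_u + wE_{m1}] + wE_{m2}$. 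With $\boldsymbol{s}$ fixed, and with $T_2$ already eliminated in closed form via constraint (g), the bracketed term depends on $\boldsymbol{t}$ alone, while $E_{m2}$ is the only term containing $\boldsymbol{W_q}$.

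Next I would classify the constraints. The latency constraints (c)--(f) involve only the time variables $\boldsymbol{t}$; the power-budget constraint (i) involves only $\boldsymbol{W_q}$; and the definition (h) together with the feasibility constraint (j) are the only places where $\boldsymbol{t}$ (through $T_c$) and $\boldsymbol{W_q}$ meet. Hence the entire coupling between the two blocks is carried by the single scalar $T_c$, which enters the objective through $E_{m2}$ and the constraints through (h) and (j). The opportunistic premise is precisely the statement that time is allocated to computation offloading first and charging is performed in whatever time remains; formally, it fixes $\boldsymbol{t}$ (hence $T_1,T_3$, hence $T_c$ by (h)) as the minimizer of $(1-w)E_u + wE_{m1}$ over (c)--(f), which is exactly (P2). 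Substituting the resulting $T_c$ into (h)--(j) leaves a problem in $\boldsymbol{W_q}$ only, $\min\, wE_{m2} = wT_c\,\text{tr}(\boldsymbol{W_q})$ subject to (i)--(j), which is (P3). I would close by invoking Lemma \ref{lemma1} to certify that each subproblem is convex in its own block.

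The main obstacle is that, read literally, the joint minimization is not separable: a jointly optimal $\boldsymbol{t}$ could deviate from the (P2) minimizer by shrinking $T_1+T_3$ to enlarge $T_c$ and so relax (j). The crux is thus to show the opportunistic assumption incurs no loss. I would argue this via a scaling observation: under $(\boldsymbol{W_q}, T_c) \mapsto (\boldsymbol{W_q}/c,\, cT_c)$ the delivered energy $\xi\,\text{tr}(\boldsymbol{h_i^\ast W_q h_i})\,T_c$ in (j) is invariant for every user, and so is the charging energy $E_{m2} = T_c\,\text{tr}(\boldsymbol{W_q})$; only the peak-power constraint (i) is affected. Consequently the charging-energy cost of delivering a prescribed $\alpha_i e_i$ is insensitive to how $T_c$ is chosen, so stealing time from offloading cannot reduce $E_{m2}$ — it can only relax (i) to admit larger feasible $\alpha_i$, which the opportunistic scheme handles after the time split is set. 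Combined with the regime $N \gg K$ in which transmission energy dominates computation energy, so that perturbing $\boldsymbol{t}$ away from the (P2) optimum strictly raises $(1-w)E_u + wE_{m1}$ without a compensating gain, this establishes that the (P2)-then-(P3) solution attains the optimum of $(P_{\text{int}})$ at fixed $\boldsymbol{s}$, completing the separation.
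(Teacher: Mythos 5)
Your proposal is correct, and its skeleton is the same as the paper's: split $E_m = E_{m1}+E_{m2}$, observe that constraints (\ref{Pnew}c)--(\ref{Pnew}f) involve only $\boldsymbol{t}$, that (\ref{Pnew}i) involves only $\boldsymbol{W_q}$, and that the sole coupling between the two blocks is the scalar $T_c$ entering through (\ref{Pnew}h), (\ref{Pnew}j) and $E_{m2}$. Where you genuinely depart from the paper is in what you do with that coupling. The paper simply invokes the opportunistic premise as a modeling assumption: data transmission has priority, so (P2) is solved first, its $T_1^\star, T_3^\star$ fix $T_c$, and (P3) inherits $T_c$ as a constant --- no claim of joint optimality is made or proved. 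You instead confront the obvious objection (a joint optimizer might shrink $T_1+T_3$ to enlarge $T_c$) and defuse it with a scaling argument: under $(\boldsymbol{W_q},T_c)\mapsto(\boldsymbol{W_q}/c,\,cT_c)$ both the delivered energy $\xi\,\text{tr}(\boldsymbol{h_i^\ast W_q h_i})T_c$ and the charging energy $T_c\,\text{tr}(\boldsymbol{W_q})$ are invariant, so (equivalently, substituting $\boldsymbol{V}=T_c\boldsymbol{W_q}$) the minimal charging energy needed to deliver prescribed amounts $\alpha_i e_i$ is a constant $m(\boldsymbol{\alpha})$ independent of $T_c$, with $T_c$ affecting only feasibility through $\text{tr}(\boldsymbol{V})\leq PT_c$. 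Hence at fixed feasible $\boldsymbol{\alpha}$ the sequential (P2)-then-(P3) solution loses nothing in the objective, and enlarging $T_c$ can only enlarge the feasible $\boldsymbol{\alpha}$, which the design handles separately via (\ref{alpha}). This is a strictly stronger justification than the paper's and is worth having. One blemish: your closing appeal to the $N\gg K$ regime and ``transmission energy dominates computation energy'' is both unnecessary and misplaced --- that dominance argument is used in the paper only to justify equal CPU-frequency allocation, and here you need nothing more than the definition of a minimizer (deviating $\boldsymbol{t}$ from the (P2) optimum cannot decrease $(1-w)E_u+wE_{m1}$; weak optimality suffices, strictness is not required). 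Dropping that sentence leaves the argument intact and cleaner.
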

\begin{proof}
The objective function of $(P_{\text{int}})$ in (\ref{Pnew}) constitutes of distinct component terms dependent on the time allocation variables $t_{u,i}, t_{d,i}$ and the transmit covariance matrix $\boldsymbol{W_q}$. More specifically, the MEC energy consumption $E_m$ can be decomposed as $E_m = E_{m1} + E_{m2}$, where $E_{m1} = \sum_{i=1}^{K} \Big[\frac{t_{d,i}(2^{\frac{\mu s_i}{t_{d,i}B}} - 1)\Gamma_{2}\sigma_{2,i}^2}{N \gamma_{i}}  + \kappa_m d_m f_{mi}^2 s_i \Big]$ is the energy consumed for computation and transmission and $E_{m2} = (T_d - T_1 - T_3) \text{tr}(\boldsymbol{W_q})$ is the energy consumed for wireless charging. 

Problems (P2) and (P3) minimize the MEC energy consumption for wireless charging (P3) and computation offloading (P2) separately. The only variable coupling these two problems is $T_c$ which must satisfy $T_c = T_d - T_1 - T_3$. Since wireless charging is an opportunistic feature of the system model with priority given to data computation, such that wireless charging takes place during the time which remains after data transmissions, (P2) can first be solved to find the optimal $T_1^\star$ and $T_3^\star$ for computation offload and results download respectively. The charging time constraint (\ref{Pnew}h) in (P3) can then be obtained as $T_c = T_d - T_1^\star - T_3^\star$ which becomes a fixed value and is no longer an optimization variable in (P3). Based on this reasoning, for any of the three modes of operation, the problem $(P_{\text{int}})$ is separable. 
\end{proof}

\subsection{Analysis for Computation Offloading Sub-problem}
Next we present the solution for the optimal time allocation for the computation offloading problem (P2). Since the problem is convex based on Lemma \ref{lemma1}, we adopt a primal-dual solution using the Lagrangian duality analysis similar to that proposed in  \cite[Theorem 1]{Malik2020} and derive the optimal solution as given in Theorem~\ref{theorem2} below.

\begin{theorem}\label{theorem2}
The offloading and downloading time, $t_{u,i}$ and $t_{d,i}$ respectively, can be obtained as a solution of the form 
\begin{equation}\label{LambertSol}
x = \frac{c B}{\ln 2} \Big(W_0 \Big(\frac{-y}{\sigma^2 e} - \frac{1}{e}\Big) + 1 \Big)
\end{equation}
where $y = -\frac{\beta_i + \xi_i}{(1 - w)}$, $x = x_{1,i} = \frac{1}{t_{u,i}}$, $c = \frac{\nu}{s_i}$, $\sigma^2 = \frac{ \Gamma_{1} \sigma_{1,i}^2}{N \gamma_i}$ to solve for $t_{u,i}$, and  $y = \frac{- \phi_i}{w}$, $x = x_{2,i} = \frac{1}{t_{d,i}}$, $c = 1/\mu s_i$, and  $\sigma^2 = \frac{\Gamma_{2} \sigma_{2,i}^2}{N \gamma_i}$ to solve for $t_d,i$. Here $\xi_i$, $\beta_i$ and $\phi_i$ are the dual variables associated with the constraints (d), (e) and (g) of problem $(P_{\text{int}})$ in (\ref{Pnew}) respectively.
\end{theorem}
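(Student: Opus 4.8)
The plan is to solve (P2) through its Karush--Kuhn--Tucker conditions, which by Lemma~\ref{lemma1} are necessary and sufficient since the problem is convex. First I would form the Lagrangian of (P2), attaching the multipliers $\xi_i$, $\beta_i$ and $\phi_i$ to the user-latency constraint (\ref{Pnew}d), the uplink timing constraint (\ref{Pnew}e), and the downlink timing constraint (\ref{Pnew}f), respectively (a multiplier for (\ref{Pnew}c) may be included but will not enter the conditions of interest). Writing the stationarity conditions $\partial \mathcal{L}/\partial t_{u,i}=0$ and $\partial \mathcal{L}/\partial t_{d,i}=0$, the key observation is that each $t_{u,i}$ (resp.\ $t_{d,i}$) appears in a single objective term of the form $g(t)=\sigma^2\,t\,(2^{1/(cBt)}-1)$, while the relevant constraints contribute only the constant $\xi_i+\beta_i$ (resp.\ $\phi_i$). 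Since $g'(t)=\sigma^2\big[2^{1/(cBt)}(1-\tfrac{\ln 2}{cBt})-1\big]$, each stationarity condition reduces to a single scalar transcendental equation in $x=1/t$.

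Next I would convert this equation to Lambert-W form. With the substitutions of the theorem, the uplink condition reads $\sigma^2\big[2^{x/(cB)}(1-\tfrac{\ln 2}{cB}x)-1\big]=y$ with $y=-(\xi_i+\beta_i)/(1-w)$. Setting $z=\tfrac{\ln 2}{cB}x$ so that $2^{x/(cB)}=e^z$ collapses it to $(1-z)e^z=1+y/\sigma^2$, and the further substitution $w'=z-1$ puts it in the canonical shape $w'e^{w'}=-(1+y/\sigma^2)/e$. Solving gives $w'=W_0\big(-\tfrac{y}{\sigma^2 e}-\tfrac{1}{e}\big)$; unwinding $z$ and using $1/(b\ln 2)=cB/\ln 2$ with $b=1/(cB)$ recovers precisely (\ref{LambertSol}). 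The downlink expression follows identically with $\sigma^2=\Gamma_2\sigma_{2,i}^2/(N\gamma_i)$, $c=1/(\mu s_i)$ and $y=-\phi_i/w$.

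I would then justify the branch and domain choices. Because $x=1/t>0$ forces $z>0$ and hence $w'=z-1>-1$, the admissible root lies in the range of the principal branch $W_0$; the secondary branch $W_{-1}$ returns $z<0$ and thus a nonphysical negative time. For the argument to lie in the domain of $W_0$ I need $-\tfrac{y}{\sigma^2 e}-\tfrac{1}{e}\ge -\tfrac{1}{e}$, i.e.\ $y\le 0$; this holds because the dual variables are nonnegative and $0\le w<1$, so $y\le 0$ throughout the dual-feasible region, guaranteeing a real, positive time allocation.

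The main obstacle I anticipate is not the algebra but the branch and domain bookkeeping---certifying that the KKT stationary point corresponds to $W_0$ (and ruling out $W_{-1}$) and that the argument stays admissible across the dual-feasible set. Convexity from Lemma~\ref{lemma1}, together with Slater's condition for strong duality, then closes the argument by certifying that this stationary point is the global optimum of (P2).
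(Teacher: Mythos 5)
Your proposal is correct and follows essentially the same route as the paper: the paper's own proof likewise applies KKT conditions to the Lagrangian of (P2) and reads off the Lambert-$W$ form, deferring the algebra to \cite[Theorem 1]{Malik2020}. Your derivation supplies exactly those omitted details---stationarity, the substitutions $z=\tfrac{\ln 2}{cB}x$ and $w'=z-1$, the $W_0$ branch and domain justification, and the correct multiplier assignment (reading the theorem's ``(g)'' as the downlink constraint (f), consistent with the paper's subgradient expressions)---and is sound.
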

\begin{proof}
The solution in (\ref{LambertSol}) can be obtained directly by applying KKT conditions on the Lagrangian dual of the problem (P2) or $P_{\text{seq,CO}}$ with respect to $t_{u,i}$ and $t_{d,i}$. Detailed proof can be obtained using an approach similar to that in  \cite[Theorem 1]{Malik2020} and is omitted for brevity.
\end{proof}

\section{Energy Beamforming for Wireless Charging}
The formulated sub-problem (P3) below aims at minimizing the energy consumed for power transfer during the wireless charging operation while making sure the charged amount is at least an $\alpha_i$ proportion of the requested amount, whereas $0 < \alpha_i \leq 1$. We re-write (P3) with relevant constraints from (\ref{Pnew}) as follows
\setcounter{equation}{22}
\begin{align*}\label{WP}
(P3): \ \min_{\boldsymbol{W_q}} \ \ &T_c \text{tr}(\boldsymbol{W_q}) \tag{\theequation}\\
\text{s.t.} \ \ &\text{tr}(\boldsymbol{W_q}) \leq P \tag{a}\\
&\xi \text{tr}(h_i^\ast \boldsymbol{W_q} h_i) T_c \geq \alpha_i e_i \ \ \forall i = 1...K \tag{b}
\end{align*}
\textcolor{blue}{For the presented problem ($P3$), a straightforward approach to write constraint (b) would be as follows
\begin{equation}\label{charge_max}
\xi \text{tr} (\boldsymbol{h_i^\ast W_q h_i})T_c - \alpha_i e_i  \leq 0
\end{equation}
with $\alpha_i = 1 \ \forall i$ to represent that the received energy is always less than the requested energy. This is infact how the constraint is depicted in the sequential problem approach in [Malik-maxCharging]. A key difference in ($P3$) presented here, and ($P_{WC}$) in [Malik-maxCharging] is the objective function. While the objective here is to minimize the charging energy, the objective for desigining the energy beamforming matrix in [Malik-maxCharging] is to maximize the received energy, without minimizing the charging energy. In the energy minimization problem $P3$ above, writing constraint (b) as in (\ref{charge_max}) would render the optimal beamforming matrix, $\boldsymbol{W_q^\star = 0}$. Therefore, $P3$ introduces a best effort approach towards wireless charging, such that the energy delivered to the users is maximized while also minimizing the overall energy consumption. In this regard we thus present the constraint on the received energy as in (b) by using the auxiliary variable $\alpha_i$ to ensure that the received energy is less than the requested energy, but $\boldsymbol{W_q^\star \neq 0}$.}

Problem (P3) is a semi-definite programming with linear objective function and linear constraints and hence is convex. We can show that strong duality holds since Slater's condition is satisfied, that is, we can find a strictly feasible point ($\boldsymbol{W_q} = p\boldsymbol{I}_{N \times N}$, $p \leq P/N$, $0 \leq \alpha_i \leq 1 \ \forall i$) in the relative interior of the domain of the problem where the inequality constraints hold with strict inequalities~\cite{Boyd2004}. 

From the definition of charging time, as $T_c = T_d - T_1 - T_3$, the problem (P3) has an interdependency on the optimization problem (P2). However, based on Lemma \ref{sep_prob}, since (P2) and (P3) are separable, we can use the optimal time allocation obtained as a solution of (P2) to find the energy beamforming matrix, $\boldsymbol{W_q}$ in (P3). Theorem \ref{theorem1} below provides the optimal beam directions for wireless charging.

\begin{theorem}\label{theorem1}
Let the eigenvalue decomposition of the optimal energy beamforming matrix be $\boldsymbol{W_q}^\star =\boldsymbol{U_q \Lambda_q^\star U_q^\ast}$, where $\boldsymbol{U_q} \in \mathbb{R}^{N \times N}$ defines the directions of energy beams and diagonal $\boldsymbol{ \Lambda_q^\star }$ is the beam power allocation matrix. Then the optimal directions for energy beams are $\boldsymbol{U_q^\star} = \boldsymbol{U_B}$, where $\boldsymbol{U_B}$ is obtained from the eigenvalue decomposition of $\boldsymbol{B} = \boldsymbol{U_B \Lambda_B U_B^\ast}$, such that $\lambda_{B,1} \leq \lambda_{B,2} \leq \ldots \leq \lambda_{B,N}$, where
\begin{equation*}
\boldsymbol{B} = (T_c + \lambda_5) \boldsymbol{I} -  \xi T_c \sum_{i=1}^{K} \psi_i \alpha_i \boldsymbol{h_i h_i^\ast}
\end{equation*}
Here $\lambda_5$ and $\psi_i$ are the dual variables associated with constraint (\ref{WP}a) and the $i^{\text{th}}$ constraint in (\ref{WP}b) respectively.
\end{theorem}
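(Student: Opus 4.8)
The plan is to attack (P3) directly through its Karush--Kuhn--Tucker (KKT) conditions, exploiting that the problem is a convex SDP with strong duality (Slater's condition already verified above), so the KKT conditions are both necessary and sufficient for optimality. First I would form the Lagrangian, attaching the multiplier $\lambda_5 \geq 0$ to the power budget constraint (\ref{WP}a), the multipliers $\psi_i \geq 0$ to the $K$ charging constraints (\ref{WP}b), and an additional positive semidefinite dual matrix $\boldsymbol{Z} \succeq 0$ to the implicit constraint $\boldsymbol{W_q} \succeq 0$ (since $\boldsymbol{W_q}$ is a transmit covariance matrix). Using the cyclic property of the trace to rewrite $\text{tr}(\boldsymbol{h_i^\ast W_q h_i}) = \text{tr}(\boldsymbol{h_i h_i^\ast W_q})$, the Lagrangian becomes affine in $\boldsymbol{W_q}$, namely $\text{tr}(\boldsymbol{B W_q}) - \text{tr}(\boldsymbol{Z W_q}) + \text{const}$, where $\boldsymbol{B}$ collects exactly the terms quoted in the statement.

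Second, I would impose stationarity. Differentiating with respect to $\boldsymbol{W_q}$ (using $\nabla_{\boldsymbol{W_q}} \text{tr}(\boldsymbol{A W_q}) = \boldsymbol{A}$ for symmetric $\boldsymbol{A}$) and setting the result to zero yields the matrix identity $\boldsymbol{B} = \boldsymbol{Z}$. Dual feasibility $\boldsymbol{Z} \succeq 0$ then forces $\boldsymbol{B} \succeq 0$, i.e., the ordered spectrum $\lambda_{B,1} \leq \cdots \leq \lambda_{B,N}$ is nonnegative, which also pins down admissible values of $\lambda_5$ and $\psi_i$.

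Third --- and this is the crux --- I would invoke complementary slackness on the PSD constraint, $\text{tr}(\boldsymbol{Z W_q}) = \text{tr}(\boldsymbol{B W_q}) = 0$. Writing $\boldsymbol{B} = \sum_j \lambda_{B,j} \boldsymbol{u_j u_j^\ast}$ with $\lambda_{B,j} \geq 0$, the trace equals $\sum_j \lambda_{B,j}\, \boldsymbol{u_j^\ast W_q u_j}$, a sum of nonnegative terms; hence each term vanishes, giving $\boldsymbol{W_q u_j} = 0$ whenever $\lambda_{B,j} > 0$, and therefore $\boldsymbol{B W_q}^\star = 0$. Thus the range of $\boldsymbol{W_q}^\star$ lies in the nullspace of $\boldsymbol{B}$, and the two PSD matrices admit a common eigenbasis, so $\boldsymbol{W_q}^\star$ can be taken with eigenvectors $\boldsymbol{U_q}^\star = \boldsymbol{U_B}$, the nonzero power $\boldsymbol{\Lambda_q^\star}$ being supported only on the eigenvectors whose eigenvalues vanish --- the smallest eigenvalues under the stated ordering --- which identifies the optimal beam directions and proves the theorem.

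I anticipate the main obstacle to be precisely this complementary-slackness step: establishing rigorously that $\text{tr}(\boldsymbol{B W_q}) = 0$ with both matrices PSD forces the stronger $\boldsymbol{B W_q}^\star = 0$ (not merely orthogonality of a few eigenspaces), and then arguing that a minimizer indeed places all its power on the zero-eigenvalue subspace of $\boldsymbol{B}$. This requires checking that at optimality $\boldsymbol{B}$ is singular --- otherwise $\boldsymbol{W_q}^\star = 0$, contradicting the best-effort charging requirement $\boldsymbol{W_q}^\star \neq 0$ discussed above. A secondary bookkeeping point is confirming the matrix-gradient computation and the sign and ordering conventions so that $\boldsymbol{B}$ emerges exactly as stated.
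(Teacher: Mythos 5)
Your proof is correct, but it follows a genuinely different route from the paper's. The paper never introduces a dual matrix for the implicit constraint $\boldsymbol{W_q} \succeq 0$: it forms the Lagrangian with only $\lambda_5$ and $\psi_i$, reduces it to $\text{tr}(\boldsymbol{BW_q}) + \text{const}$, and then minimizes this trace by a rearrangement-type trace inequality (Marshall--Olkin, Ch.~9, H.1.h): over all choices of eigenbasis for $\boldsymbol{W_q}$ with given eigenvalues, $\text{tr}(\boldsymbol{BW_q})$ is minimized by taking $\boldsymbol{U_q} = \boldsymbol{U_B}$ and pairing the spectra in opposite order ($\boldsymbol{\lambda_B}$ ascending against $\boldsymbol{\lambda_q}$ descending); the eigenvalues themselves are left free and determined afterwards by the LP of Theorem 3. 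That argument has one practical advantage: it characterizes the minimizer of the Lagrangian for \emph{arbitrary} fixed dual variables, which is exactly what the inner primal--dual subgradient algorithm needs at every non-optimal dual iterate, whereas your KKT/complementary-slackness characterization applies only at the optimal primal--dual pair. Conversely, your route buys rigor and structure the paper's proof lacks: you handle the PSD constraint explicitly; you establish $\boldsymbol{B} = \boldsymbol{Z} \succeq 0$ at optimality (the paper never shows $\boldsymbol{B}$ must be PSD, without which minimizing $\text{tr}(\boldsymbol{BW_q})$ over unconstrained eigenvalues would be unbounded below); and your conclusion $\boldsymbol{B}\boldsymbol{W_q}^\star = \boldsymbol{0}$ confines the optimal power to the nullspace of $\boldsymbol{B}$ --- a support restriction strictly stronger than ``aligned eigenvectors'' that also explains the paper's numerical finding that the optimal number of energy beams is typically smaller than $K$. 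Your observation that $\boldsymbol{B}$ must be singular whenever some $\alpha_i e_i > 0$ (else $\boldsymbol{W_q}^\star = \boldsymbol{0}$ violates the charging constraints) is likewise correct and worth keeping. One bookkeeping caveat: since constraint (\ref{WP}b) reads $\xi T_c \text{tr}(\boldsymbol{h_i^\ast W_q h_i}) \geq \alpha_i e_i$, the factor $\alpha_i$ multiplies $e_i$ in the Lagrangian, so the derivation actually yields $\boldsymbol{B} = (T_c+\lambda_5)\boldsymbol{I} - \xi T_c \sum_{i=1}^{K} \psi_i \boldsymbol{h_i h_i^\ast}$ without $\alpha_i$ inside the sum; the paper's own appendix agrees, so the $\alpha_i$ appearing in the theorem statement (and echoed in your claim that your $\boldsymbol{B}$ matches it ``exactly'') is an inconsistency inherited from the statement, not a flaw in your argument.
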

\begin{proof}
See Appendix B.
\end{proof}

Theorem \ref{theorem1} provides the optimal directions of the energy beams for the beamforming matrix, $\boldsymbol{W_q}$. What is left now is to obtain the optimal power allocation across the energy beams, that is, the eigenvalues of the transmit covariance matrix for wireless charging. To this end, we substitute the optimal beam directions from Theorem \ref{theorem1} into (P3) and re-write the formulation in terms of the beam power allocation only as (P4) below. Beam power allocation, $\boldsymbol{\lambda_q}$, can then be obtained as a solution to a Linear Programming (LP) problem given in Theorem \ref{theorem3} below.
\begin{theorem}\label{theorem3}
The optimal beam power allocation is derived by solving the LP problem below
\setcounter{equation}{23}
\begin{align*}\label{P4}
(P4): \ \min_{\boldsymbol{\lambda_q}} \ \ &\sum_{i=1}^K \lambda_{q,i} \tag{\theequation}\\
\text{s.t.} \ \ & \sum_{i=1}^K \lambda_{q,i}  \leq P, \ \ \ \  \lambda_{q,1} \geq ... \geq \lambda_{q,K} \geq 0 \tag{a-b}\\
&\boldsymbol{A \lambda_q} \geq \text{\emph{diag}}(\boldsymbol{\alpha}) \boldsymbol{b} \tag{c}
\end{align*}
where $\boldsymbol{\lambda_q} = [\lambda_{q,1}, ..., \lambda_{q,K}]^T$, $\boldsymbol{A} \in \mathbb{R}^{K \times K} = [\boldsymbol{a_1^\ast}...\boldsymbol{a_K^\ast}]$, $\boldsymbol{a_i}^\ast = \text{\emph{diag}}(\boldsymbol{q_i q_i^\ast})$, $\boldsymbol{q}_i^\ast = \boldsymbol{h}_i^\ast  \boldsymbol{U_B}$ and $\boldsymbol{b} \in \mathbb{R}^{K \times 1} = [\pi_1 ... \pi_K]$, $\pi_i = \frac{e_i}{\xi T_c} \ \forall i = 1...K$.
\end{theorem}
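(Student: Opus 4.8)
The plan is to fix the optimal beam directions $\boldsymbol{U_B}$ from Theorem \ref{theorem1} and substitute the eigendecomposition $\boldsymbol{W_q} = \boldsymbol{U_B \Lambda_q U_B^\ast}$ into (P3), turning the semidefinite program over the matrix $\boldsymbol{W_q}$ into a scalar program over the beam powers $\lambda_{q,i}$ lying on the diagonal of $\boldsymbol{\Lambda_q}$. First I would rewrite the objective: since $\boldsymbol{U_B}$ is unitary, $T_c \, \text{tr}(\boldsymbol{W_q}) = T_c \, \text{tr}(\boldsymbol{\Lambda_q}) = T_c \sum_i \lambda_{q,i}$, and because $T_c$ is a fixed positive constant once (P2) has been solved for $T_1^\star$ and $T_3^\star$ via Lemma \ref{sep_prob}, minimizing it is equivalent to minimizing $\sum_i \lambda_{q,i}$, the objective of (P4). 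The transmit-power constraint (\ref{WP}a) maps identically to $\sum_i \lambda_{q,i} \leq P$, which is (\ref{P4}a), and positive semidefiniteness of $\boldsymbol{W_q}$ forces each $\lambda_{q,i} \geq 0$.

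Next I would translate the $K$ charging constraints (\ref{WP}b). Writing $\boldsymbol{q}_i^\ast = \boldsymbol{h}_i^\ast \boldsymbol{U_B}$, the scalar quadratic form becomes $\boldsymbol{h}_i^\ast \boldsymbol{W_q} \boldsymbol{h}_i = \boldsymbol{q}_i^\ast \boldsymbol{\Lambda_q} \boldsymbol{q}_i = \sum_j |q_{i,j}|^2 \lambda_{q,j} = \boldsymbol{a}_i^\ast \boldsymbol{\lambda_q}$, where $\boldsymbol{a}_i^\ast = \text{diag}(\boldsymbol{q}_i \boldsymbol{q}_i^\ast)$ collects the squared magnitudes $|q_{i,j}|^2$. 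Substituting into (\ref{WP}b) and dividing by $\xi T_c > 0$ gives $\boldsymbol{a}_i^\ast \boldsymbol{\lambda_q} \geq \alpha_i e_i / (\xi T_c) = \alpha_i \pi_i$; stacking these $K$ inequalities yields the matrix form $\boldsymbol{A \lambda_q} \geq \text{diag}(\boldsymbol{\alpha}) \boldsymbol{b}$ of (\ref{P4}c), with $\boldsymbol{b} = [\pi_1, \dots, \pi_K]^T$ and $\pi_i = e_i/(\xi T_c)$.

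The step I expect to demand the most care is the reduction from $N$ candidate beam powers to the $K$ appearing in (P4), together with the ordering (\ref{P4}b) inherited from Theorem \ref{theorem1}. Because $\boldsymbol{B} = (T_c + \lambda_5)\boldsymbol{I} - \xi T_c \sum_i \psi_i \alpha_i \boldsymbol{h_i h_i^\ast}$ perturbs the scaled identity only within the subspace spanned by $\{\boldsymbol{h_i}\}$, whose dimension is at most $K$, its remaining $N-K$ eigenvectors are orthogonal to every $\boldsymbol{h_i}$; their coordinates $q_{i,j}$ vanish, so allocating power to them cannot relax any charging constraint while still inflating the objective $\sum_i \lambda_{q,i}$. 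Hence at optimality those beams carry zero power, and the search collapses onto the $K$ eigenvectors with smallest eigenvalues $\lambda_{B,1} \leq \cdots \leq \lambda_{B,K}$, namely the first $K$ columns of $\boldsymbol{U_B}$, which is precisely why $\boldsymbol{A} \in \mathbb{R}^{K \times K}$ and $\boldsymbol{\lambda_q} \in \mathbb{R}^{K}$. The decreasing ordering $\lambda_{q,1} \geq \cdots \geq \lambda_{q,K} \geq 0$ then follows from the rearrangement structure underlying Theorem \ref{theorem1}: the coupling term in the Lagrangian equals $\text{tr}(\boldsymbol{B} \boldsymbol{W_q}) = \sum_i \lambda_{B,i} \lambda_{q,i}$, which is minimized by pairing the largest beam powers with the smallest eigenvalues of $\boldsymbol{B}$, so imposing this ordering is consistent with optimality and does not exclude the optimal point. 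With a linear objective and purely linear inequality constraints in $\boldsymbol{\lambda_q}$, the reduced problem is exactly the LP (P4), which completes the argument.
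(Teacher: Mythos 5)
Your proof is correct and follows essentially the same route as the paper's: substitute $\boldsymbol{W_q} = \boldsymbol{U_B \Lambda_q U_B^\ast}$ with the directions fixed by Theorem \ref{theorem1}, define $\boldsymbol{q}_i^\ast = \boldsymbol{h}_i^\ast \boldsymbol{U_B}$ and $\boldsymbol{a}_i^\ast = \text{diag}(\boldsymbol{q}_i \boldsymbol{q}_i^\ast)$ to convert the charging constraints into $\boldsymbol{A\lambda_q} \geq \text{diag}(\boldsymbol{\alpha})\boldsymbol{b}$, and inherit the descending ordering of $\boldsymbol{\lambda_q}$ from the rearrangement argument behind $\text{tr}(\boldsymbol{BW_q}) = \sum_i \lambda_{B,i}\lambda_{q,i}$. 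If anything, your writeup is more complete than the paper's, since you explicitly justify two steps the paper merely asserts: that the $N-K$ beams orthogonal to the span of $\{\boldsymbol{h}_i\}$ satisfy $q_{i,j}=0$ and therefore carry zero power at optimality (collapsing the problem from $N$ to $K$ variables), and that dividing by the fixed constant $\xi T_c > 0$ turns the received-energy inequalities into the stated linear constraints while keeping the $\alpha_i$ factors intact.
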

\begin{proof}
See Appendix B.
\end{proof}

In problem (P4), $\boldsymbol{\alpha} \in [0,1]$ is an auxiliary variable to ensure that the amount of charged energy is as large as possible within the transmit power constraint. While different values of $\boldsymbol{\alpha}$ will result in different power allocation, we are interested in the largest $\boldsymbol{\alpha}$ that makes (P4) feasible, so that the amount of received energy is largest while minimizing the transmit power. Since the optimal solution of the LP is linear to the constraint $P$, we can solve this problem without loss of optimality by first setting $\boldsymbol{\alpha = 1}$ and removing the power constraint, then solve for the resulting LP. If the sum of solved $\lambda_{q,i}$ is more than $P$, then $\boldsymbol{\alpha}$ will be the scaling factor to bring this sum to be equal to $P$ and all optimal values of $\lambda_{q,i}$ will be scaled by $\alpha_i$. Otherwise $\alpha_i$ stays as 1 and $\lambda_{q,i}$ stays unchanged. The largest energy ratio variable $\boldsymbol{\alpha}$ is hence obtained as
\begin{equation}\label{alpha}
  \boldsymbol{\alpha} =\begin{cases}
    \boldsymbol{1}, & \text{$\sum_{i=1}^{K} \lambda_{q,i}^{\star,0} <= P$}\\
    \text{diag}(\boldsymbol{b})^{-1}\boldsymbol{A\lambda}_q^{\star,0} \frac{P}{\sum_{i=1}^K \lambda_{q,i}} , & \text{otherwise}
  \end{cases}
\end{equation} 
where $\boldsymbol{\lambda}_q^{\star,0}$ are the optimal values of $\lambda_{q,i}$ when $\alpha_i = 1$ and transmit sum power constraint $P$ is removed.

\section{Nested Algorithm Design}
\textcolor{blue}{While problem ($P_{\text{int}}$) is not convex in all the optimizing variables, Lemma \ref{lemma1} shows that by fixing the offloaded bits s, the problem is convex in all the remaining optimizing variables with
a convex objective function and a convex feasible set. We divide problem ($P_{\text{int}}$) into an outer iterative problem which solves for the optimal balance between offloaded bits and those retained at the users and two inner subproblems described in Lemma \ref{sep_prob}: problem ($P2$) which finds the optimal time allocation given a fixed number of offloaded bits $\boldsymbol{s}$, while ($P3$) solves for the transmit covariance matrix for power transfer. Since ($P_{\text{int}}$) is jointly convex in $\boldsymbol{t}$ and $\boldsymbol{W_q}$, any algorithm which solves a convex problem can be applied, however, standard convex-solvers are often inefficient due to their inability to exploit the specific problem structure. We therefore propose a customized nested algorithm, as follows, which includes as outer algorithm to determine $\boldsymbol{s^\star}$ and an inner algorithm to solve for $\boldsymbol{t^\star}$ and $\boldsymbol{W_q^\star}$ to efficiently reach the solution for the problem ($P_{\text{int}}$).}
\subsection{Nested Algorithm Architecture}
Based on Lemma \ref{lemma1}, the algorithm for solving $(P_{\text{int}})$ is designed to have a nested architecture with an outer and an inner loop, in which the outer loop solves for $s_i$ decrementally while the inner loop solves for the remaining variables at a fixed value of $s_i$. More specifically, the nested algorithms work as follows. We first initialize the offloaded bits $\boldsymbol{s}$ and the dual variables in the outer algorithm. At the current value of $s$, the inner algorithm is executed, for which we use a primal-dual approach employing a subgradient method. At convergence where the stopping criterion for the dual problem is satisfied, the inner algorithm returns the control to the outer algorithm. Based on the newly updated primal solution from the inner algorithm, we proceed to updating $\boldsymbol{s}$ by some $\Delta s_i$ for each user for the next iteration of the outer algorithm, using a latency aware descent algorithm. Similar to [ref-MEC], the latency aware descent algorithm is based on the standard Newton method with a novel modification to the classical stopping criterion to account for the latency constraint.

\begin{figure}[t]
\centering
\includegraphics[scale = 0.6]{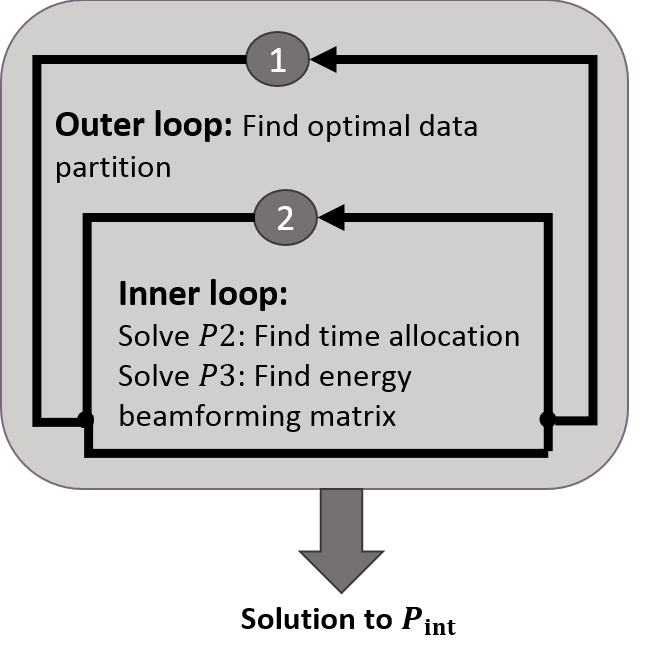}
\caption{Algorithm architecture for $P_{int}$}
\label{sol_flow}
\end{figure}

Thus, for each iteration of the outer algorithm, we solve for the inner optimization problems (P2) and (P3) in sequence. For a given value of $\boldsymbol{s}$, we solve (P2) to obtain time allocation, and calculate the charging time $T_c = T_d - T_1 - T_3$ which is used by problem (P3) to find the optimal energy beamforming matrix. These steps for the nested optimization are repeated until a minimum point for the weighted total energy consumption is reached where all the constraints in the original problem $(P_{\text{int}})$ are satisfied. At each iteration of (P3), we solve the LP problem (P4) to find the optimal beam power allocation using a standard convex solver. The algorithm flow is depicted in Figure \ref{sol_flow} and the steps for solving $(P_{\text{int}})$ are given in Algorithm 1.

\textcolor{blue}{A nested algorithm approach has also been proposed in [Malik-maxCharging] to find the optimal data partitioning, time allocation and energy beamforming, however there are subtle differences in the two constructions which must be brought to notice. For the integrated formulation presented in this paper, the solution for energy beamforming is also a part of the inner algorithm as the subproblem (P3). On the other hand, [Malik-maxCharging] constructs two sequential problems $(P_{\text{WC}})$ to solve for $\boldsymbol{W_q}$m and $(P_{\text{CO}})$ which solves for the optimal $\boldsymbol{s}$ and $\boldsymbol{t}$. Different from this work, the sequential problem, $(P_{\text{WC}})$ in [Malik-maxCharging] solves for the transmit covariance matrix $\boldsymbol{W_q}$ as an independent problem after obtaining the optimal time allocation solution from $(P_{\text{CO}})$. The optimal $\boldsymbol{t^\star}$ from $(P_{\text{CO}})$ is used to calculate the charging time $T_c$. There is a subtle difference in that for $(P_{\text{int}})$, at each iteration of the outer algorithm, the two inner problems (P2) and (P3) are solved sequentially but the value for charging time $T_c$ is only the temporary value that is optimal for a given value of $\boldsymbol{s}$ and is not the final optimal charging time. For the sequential problem, however, the wireless charging problem $(P_{\text{seq,WC}})$ is solved after obtaining the final optimal solution for time allocation from $(P_{\text{CO}})$ which is then used to obtain the optimal charging time $T_c^\star$ to find the optimal energy beamforming matrix.}
\begin{algorithm}[t]
\caption{Solution for ($P_{\text{int}}$)}
\text{Given:} Distances $d_{i} \ \forall i$. Channel $\boldsymbol{H = G^{T}}$. Precision, $\epsilon_1, \epsilon_2$, Data $u_i$, Latency $T_d$. \text{Initialize:} $s_i$\\
\textbf{Begin Outer Algorithm for $P_{\text{int}}$}\\
\textbf{Given} a starting point $\boldsymbol{s}$, \textbf{Repeat}
\begin{enumerate}
\item Compute $\Delta \boldsymbol{s}$ and initialize dual variables, $\lambda_1,\lambda_5, \beta_i, \xi_i, \phi_i, \psi_i \forall i$.
\item \textbf{Begin Inner Algorithm for (P2)}
\begin{itemize}[leftmargin=*]
\item Calculate $t_{u,i}$ and $t_{d,i}$, using (\ref{LambertSol}). Then $T_1^\star = \max t_{u,i}^\star$ and $T_3^\star = \max t_{d,i}^\star$.
\item Update $p_i$ and $\eta_i$ using (\ref{power_alloc}) and calculate $\sigma_{1,i}^2$ and $\sigma_{2,i}^2$.
\item Find dual function in (\ref{DualFn}), stop if dual variables converge with $\epsilon_2$, else find subgradients in (\ref{subgrads}-d), update dual-variables using subgradient method and continue
\end{itemize}
\textbf{End Inner Algorithm for (P2)}
\item \textbf{Begin Inner Algorithm for (P3)}
\begin{itemize}
\item Calculate time for wireless charging, $T_c^\star = T_d - T_1^\star - T_3^\star$
\item Find $\boldsymbol{\lambda_q}^\star$ from (P4). Then $\boldsymbol{W_q}^\star = \boldsymbol{U_B \Lambda_q^\star U_B^\ast}$, where $\boldsymbol{\Lambda_q^\star} = \textbf{diag}(\boldsymbol{\lambda_q}^\star)$
\item Find dual function in (\ref{P2dualfn}), stop if dual variables converge with $\epsilon_2$, else, find subgradients in (\ref{subgrads}), update dual-variables using subgradient method and continue
\end{itemize}
\textbf{End Inner Algorithm for (P3)}
\item \textit{Line search and Update}. $s_i := s_i + t_i\Delta s_i$.
\end{enumerate}
\textbf{Until} stopping criterion is satisfied with $\epsilon_1$ or latency constraint $T_d$ is met.\\
\textbf{End Outer Algorithm for $P_{\text{int}}$}\\
\end{algorithm}

\subsection{Inner Primal-Dual Algorithms}
For the inner algorithm, we design a primal-dual algorithm where the primal variable values are obtained as closed form functions of the dual variables, and the dual variables are found by solving the dual problem using a sub-gradient methods. The dual-function for the convex optimization problem (P2) can be defined as
\begin{equation}\label{DualFn}
    g_{P2}(\lambda_1,\boldsymbol{\beta, \xi, \phi}) = \inf_{\boldsymbol{t}} \mathcal{L}_{P2}(\boldsymbol{t}, \lambda_1, \boldsymbol{\beta, \xi, \phi})
\end{equation}
where $\mathcal{L}_{P2}$ is the Lagrangian for problem (P2) and the dual-problem is defined as
\begin{align}\label{PDual}
    \text{P2-dual: }\max \ &g_{P2}(\lambda_1,\boldsymbol{\beta, \xi, \phi})  \ \  \text{s.t. } \lambda_1 \geq 0, \beta_i, \xi_i, \phi_i \geq 0 \ \forall i = 1...K 
\end{align}
where $\lambda_1$, $\boldsymbol{\beta, \xi}$, and $\boldsymbol{\phi}$ are the dual variables associated with constraints (c-f) in (\ref{Pnew}), respectively.

Similarly, the dual-function for the convex optimization problem (P3) is defined in (\ref{P2dualfn}). The dual problem is given as
\begin{align}\label{P2dual}
 \text{P3-dual: }\: \max \ \ &g_{P3}(\boldsymbol{\psi}, \lambda_5) \ \  \text{s.t.} \ \ \lambda_5 \geq 0, \psi_i \geq 0 \ \text{for } i = 1...K 
 \end{align} 
Using the closed form expressions for the primal variables in terms of the dual-variables as in Theorems \ref{theorem2}-\ref{theorem3}, the dual functions above are functions of only the dual-variables. The above dual problems can then be solved using the subgradient method~\cite{Boyd2003}. 

The subgradient terms with respect to all dual variables of the original problem (P2), (P3) and $(P_{\text{seq,WC}})$ are as given below
\setcounter{equation}{30}
\begin{align*}\label{subgrads}
&\nabla_{\lambda_1}\mathcal{L} = \sum_{j=1}^{3} T_j - T_{\text{delay}} \tag{\theequation a}\\
&\nabla_{\beta_i}\mathcal{L} = t_{u,i} - T_1, \ \ \nabla_{\phi_i}\mathcal{L} = t_{d,i} - T_3 \ \ \nabla_{\xi_i}\mathcal{L} = \frac{c_i q_i}{f_{u,i}}  + t_{u,i} - T_d, \ \ \ \ \ \  \tag{\theequation b-d}\\
&\nabla_{\psi_i}\mathcal{L} = \alpha_i e_i - \xi \text{tr} \left( \boldsymbol{h_i^\ast W_q h_i} \right )T_c  \ \ \ \nabla_{\lambda_5}\mathcal{L} = \text{tr}(\boldsymbol{W_q}) - P \tag{\theequation e-f}
\end{align*}

For implementation of the inner algorithms, we use the subgradient method to solve the constrained convex optimization problems (P2) and (P3)~\cite{Boyd2003}. The designed algorithms find the subgradient for the negative dual functions ($-g_{P2}, -g_{P3}$) since the dual problems in (\ref{P2dual}, \ref{PDual}) are maximization problems for the respective dual functions. At each iteration, the primal variables are updated based on Theorems \ref{theorem2}-\ref{theorem3}. The dual variables vector $x$ is updated as $\boldsymbol{x}^{(k+1)} = \boldsymbol{x}^{(k)} - \beta_k \boldsymbol{g}^{(k)}$, where $\beta_k$ is the $k^{\text{th}}$ step-size, and $\boldsymbol{g}^{(k)}$ is the subgradient vector at the $k^{\text{th}}$ iteration evaluated using the sub-gradient expressions in (\ref{subgrads}-f). We use the non-summable diminishing step size, setting $\beta_k = 1/\sqrt{k}$, using which the algorithm is guaranteed to converge to the optimal value~\cite{Boyd2003}. Since the subgradient method is not a descent method, the algorithms keep track of the best point for the dual functions at each iteration of the inner algorithm. These primal-dual update steps are repeated until the desired level of precision, $\epsilon_2$, is reached for the stopping criterion.

For the diminishing step size as that considered, the subgradient method is guaranteed to converge~\cite{Boyd2003} as $k \to \infty$. In the subgradient method, since the key quantity is not the function value but rather the Euclidean distance to the optimal set~\cite{Boyd2003}, therefore, for our implementation we define the stopping criterion as: $\lVert \boldsymbol{g}^{(k+1)} - \boldsymbol{g}^{(k)} \rVert_2 \leq \epsilon_2$. The steps for the integrated algorithm are shown in Algorithm 1.
\section{Numerical Results}
\begin{figure}[t]
\centering
\includegraphics[scale = 0.6]{}
\caption{System layout}
\label{system_model}
\end{figure}
In this section, we evaluate the solution of energy minimization problem (P) with respect to energy and time consumption, the partition of bits offloaded to the MEC for computation and the received energy via wireless charging. We consider a $20\text{m}\times20\text{m}$ area with 4 APs and 16 users randomly located with $K = 4$ users per AP's coverage area and $N = 100$ as shown in Figure~\ref{system_model}. For simulations, $w = 10^{-3}$, $T_d = 20$ms (for AR/VR applications~\cite{Intel2017}), $B = 5$MHz, $\tau_c = B T_d$, $\Gamma_1 = \Gamma_2 = 1.25$, $\mu = 2$, $\kappa_i = 0.5$pF, $\kappa_m = 5$pF, $c_i = 1000$, $d_m = 500$, $\gamma = 2.2$, $\sigma = 2.7$dB, $\sigma_r^2 = -127$dBm, $\sigma_k^2 = -122$dBm, $f_{u,i} = f_u = 1800$ MHz $\forall i$. Each MEC processor has 24 cores with maximum frequency of $3.4$GHz, and we use $f_{m,i} = f_m = \frac{24 \times 3400}{K}$ MHz $\forall i$. Transmit power available at user and AP is 23 dBm and 46 dBm respectively. To calculate the interference and noise power ($\sigma_{1,i}^2$, $\sigma_{2,i}^2$) which include massive MIMO pilot contamination and intercell interference, we assume that user terminals transmit at their maximum power, that is $p_{qi} = 23$dBm, and the interfering APs use equal power allocation in the downlink, that is $\eta_{qi} = \frac{1}{K} \ \forall i$. Numerical results are averaged independent channel realizations of $\mathbf{H}$ and $\mathbf{G}$ for 1000 spatial realizations (randomly generated user locations).

\subsection{Comparison of Wireless Charging Schemes}
\begin{figure}[t]
\centering
\includegraphics[scale = 0.6]{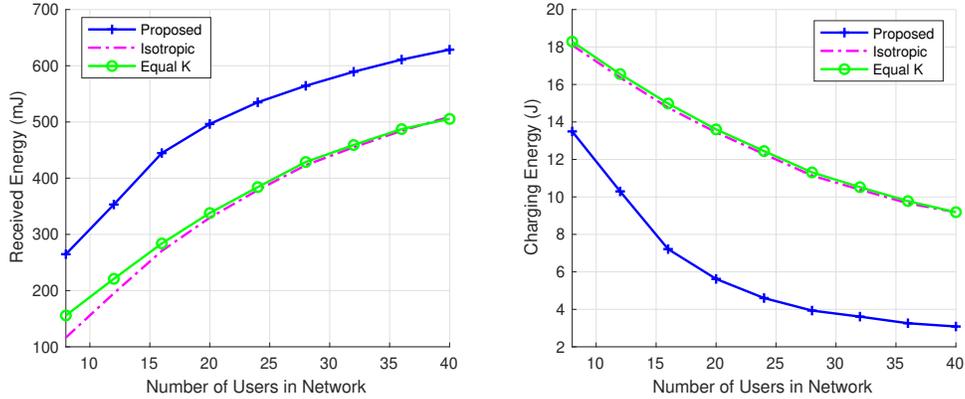}
\caption{Comparison of the proposed integrated and sequential energy beamforming schemes with isotropic wireless charging, and directed K-beam charging with equal power allocation (equal K)}
\label{beam_effect}
\end{figure}
Figure \ref{beam_effect} shows a comparison of the proposed integrated and sequential wireless charging schemes with two other schemes: (i) isotropic scheme where $\boldsymbol{W_q} = \frac{P}{N}\boldsymbol{I}$ and equal charging power $P/N$ is allocated across all $N$ antennas of the AP, and (ii) equal K with directional charging using the beamforming directions proposed in Theorem \ref{theorem1}, but with equal power allocation $P/K$ across $K$ energy beams. For fairness of comparison with the sequential scheme, we use power scaling for the other two schemes such that the users receive energy at most equal to the requested amounts similar to the sequential scheme. Since wireless charging is proposed as a billable service for future networks, this is also a necessary design consideration from the service providers' and consumers' perspective. 

Figure \ref{beam_effect} shows the received energy on the left and the transmitted energy on the right. As illustrated in this figure, the sum received energy for the proposed scheme is significantly larger than the other two schemes. The wireless charging performance for the isotropic and beamforming with equal power allocation scheme are similar. However, for smaller networks the equal power allocation scheme with directed power transfer does offer some improvement over the isotropic scheme in terms of the received energy. The proposed integrated charging energy minimization scheme consumes the lowest charging energy and offers better received energy performance than both the isotropic and equal power schemes nonetheless. 

\subsection{Number Of Charging Energy Beams}
\begin{figure}[t]
\centering
\includegraphics[scale = 0.6]{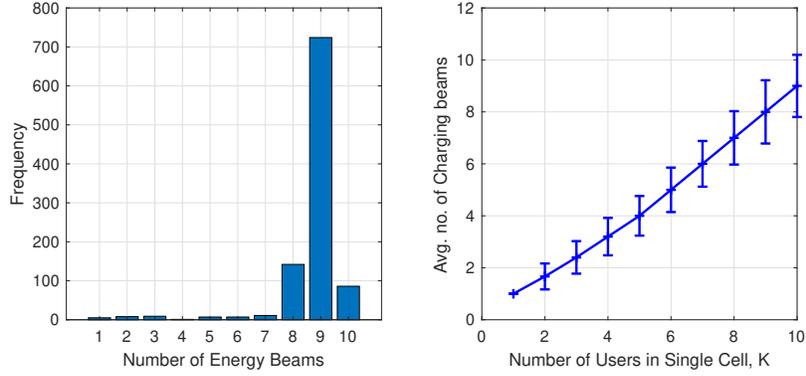}
\caption{Number of charging energy beams for the sequential and integrated approaches for K = 10 users.}
\label{num_beams}
\end{figure}
Another interesting finding presented in the plot (bottom) in Figure \ref{num_beams} is the optimal number of energy beams for $K = 10$ users per cell. For the isotropic wireless charging, there are always $N > K$ energy beams. For the case of $K$ beams with equal power allocation, the number of beams is equal to the number of users in the cell. While multiple energy beams may be necessary for a multi-user system as also previously discussed in~\cite{Zeng2017}, the optimal number of energy beams for the integrated charging energy minimization scheme is usually less than the number of users. Since each energy beam can contribute as additional RF charging sources for neighboring users, the transmit beamforming can be intelligently designed as proposed to limit the number of energy beams which can prevent energy losses caused by transmitting energy in numerous directions and hence also contribute to energy minimization. 

\subsection{Charging Efficiency}
\begin{figure}[t]
\centering
\includegraphics[scale = 0.6]{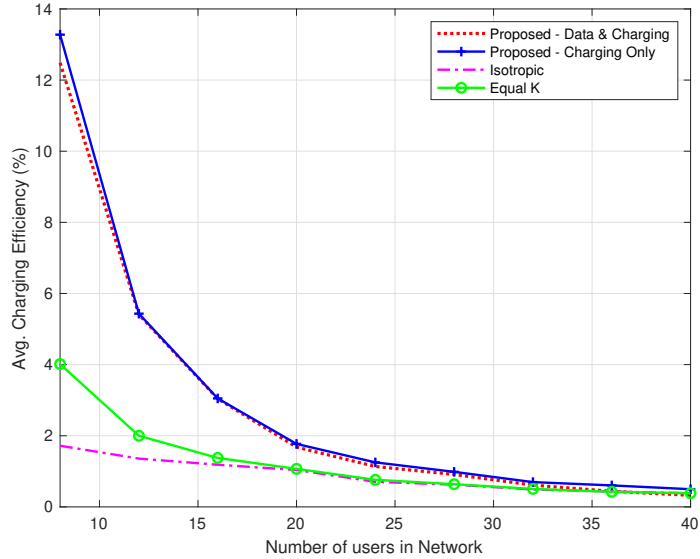}
\caption{Average charging efficiency}
\label{efficiency}
\end{figure}
Figure \ref{efficiency} shows the average charging efficiency, Average charging efficiency is defined as the average percentage of received energy at the users end compared to the requested energy. The figure shows a comparison for the directed equal power and isotropic schemes to the proposed scheme under two operating modes, the \text{charging only} mode where connected users request wireless charging but do not require computation at the edge, and the \textit{data and charging} mode where connected users request both wireless charging as well as computation offloading. The average efficiency is seen to decrease with an increase in the network size as expected. While the directed beams equal charging scheme shows some improvement over the isotropic scheme for small network sizes, both schemes generally have less than 50\% efficiency compared to the proposed scheme. As illustrated, even under the joint data and charging mode, where the MEC-AP simultaneously optimizes the resources required for both computation offloading as well as wireless charging, the decrease in charging efficiency is negligible compared to the charging only mode. 

\section{Conclusion}
In this paper, we examined a massive MIMO enabled multi-access edge computing network providing computation offloading and on-demand wireless charging to its connected users under a round trip latency constraint. We formulated a novel system-level problem to minimize the total transmit energy consumption. We design an efficient nested algorithm by an optimal division into convex subproblems to solve for data partitioning, time allocation and transmit energy beamforming matrices. Our algorithm demonstrates that even with significant amounts of data to be computed, the network can deliver decent amounts of charged energy to the users, therefore validating a practical coexistence of computation offloading and wireless charging. A comparison with isotropic power transfer and equal power energy beamforming shows that optimal design of the energy beamforming for wireless charging is significantly more energy efficient, and is necessary for implementing user fairness.


\section{Appendix}
\subsection{Appendix A - Proof for Lemma \ref{lemma1}}
Consider problem $(P_{\text{int}})$ in (\ref{Pnew}) at fixed values of $s_i$. The objective function is affine and convex. Convexity in $\boldsymbol{t}$ can be established similar to [lemma1-refMEC]. Constraint (b) contains a function of the form $f_1 = (T_d - T_1 - T_3) \text{tr}(\boldsymbol{W_q})$, with affine term $T_d \text{tr}(\boldsymbol{W_q})$. The second and third term are of similar form. Considering $\tilde{f}_1 = -T_1 \text{tr}(\boldsymbol{W_q})$, to check for joint convexity in $T_1$ and $\boldsymbol{W_q}$, the Hessian of $\tilde{f}_1$ is the block matrix, $\nabla^2 \tilde{f}_1 = [\boldsymbol{0}_{N \times N} \ \ -\boldsymbol{I}_{N \times N}; -\boldsymbol{I}_{N \times N} \ \ \boldsymbol{0}_{N \times N}]$, with repeated eigenvalues $\pm 1$ and therefore doesn't show convexity. However, the sublevel sets $\{(T_1 \in \mathbb{R}^+, \boldsymbol{W_q} \in \mathbb{R}^{N\times N}), -T_1 \text{tr}(\boldsymbol{W_q}) \leq \alpha \}$ are jointly convex in $T_1$ and $\boldsymbol{W_q}$ in the domain of the function and therefore the function $\tilde{f}_1$ is quasiconvex~\cite[Example 3.31]{Boyd2004}. Therefore constraint (\ref{Pnew}b) is a sum of convex and quasiconvex functions with convex sets and sublevel sets respectively. Similarly, constraint (i) also has convex sublevel sets with a quasiconvex function of the form  $-T_c \text{tr} (\boldsymbol{h_i^\ast W_q h_i})$.  Constraints (j) is linear trace function of $\boldsymbol{W_q}$ and hence convex. 

Based on the above, the objective is convex and all the constraints are either convex or have convex level sets in the remaining variables. Thus the problem is convex at given $s_i$.

\subsection{Appendix B - Proof for Theorem \ref{theorem1} and \ref{theorem3}}
\subsubsection{Proof for Theorem \ref{theorem1}}
Form matrix $\boldsymbol{B}$ from the Lagrangian for problem (P3) as follows
\setcounter{equation}{31}
\begin{align*}\label{Lsub}
\mathcal{L}_{P3} &= T_c \text{tr}(\boldsymbol{W_q})+ \lambda_5 \left(\text{tr}(\boldsymbol{W_q}) - P \right) - \xi T_c \text{tr} \left(\left (\sum_{i=1}^{K} \psi_i h_i h_i^\ast  \right)\boldsymbol{W_q} \right) + \sum_{i=1}^{K} \psi_i \alpha_i e_i \\
&= \text{tr} \left( \left [ (T_c + \lambda_5) \boldsymbol{I} -  \xi T_c \sum_{i=1}^{K} \psi_i \boldsymbol{h_i h_i^\ast}\right ]\boldsymbol{W_q}\right) + \sum_{i=1}^{K} \psi_i \alpha_i e_i - \lambda_5 P \\
&= \text{tr} \left( \boldsymbol{BW_q}\right) + \sum_{i=1}^{K} \psi_i \alpha_i e_i - \lambda_5 P \tag{\theequation}
\end{align*}
The dual-function for the problem (P3) is obtained as
\begin{equation}\label{P2dualfn}
g_{P3}(\boldsymbol{\psi}, \lambda_5) = \min_{\boldsymbol{W_q}} \mathcal{L}_{P3} (\boldsymbol{W_q, \psi},\lambda_5)
\end{equation}
To minimize the Lagrangian in (\ref{Lsub}) for the dual function in (\ref{P2dualfn}), we only need to consider the term involving $\boldsymbol{W_q}$  
\begin{equation}\label{partL}
\min_{\boldsymbol{W_q}} \ \ \text{tr} \left( \boldsymbol{BW_q}\right) 
\end{equation}
By applying an inequality relating the trace of a matrix product to the sum of eigenvalue products~\cite[Ch.~9, H.1.h.]{Olkin1979}, $\text{tr}(\boldsymbol{BW_q})$ is minimized by choosing $\boldsymbol{U_q = U_B}$ such that
\begin{equation}\label{trineq}
\text{tr}(\boldsymbol{BW_q}) = \sum_{i = 1}^{N} \lambda_{B,i} \cdot \lambda_{q,i}
\end{equation}
where the eigenvalues of $\boldsymbol{W_q}$ are in descending order $\lambda_{q,1} \geq \lambda_{q,2} \geq \ldots \geq \lambda_{q,N}$ and those of matrix $\boldsymbol{B}$ are in ascending order such that $\lambda_{B,1} \leq \lambda_{B,2} \leq \ldots \leq \lambda_{B,N}$ and the eigenvectors $\boldsymbol{U_B}$ are obtained based on this order of the corresponding eigenvalues in $\boldsymbol{\Lambda_B} = \text{diag}(\boldsymbol{\lambda_B})$. Since the eigenvalues of $\boldsymbol{B}$ and $\boldsymbol{W_q}$ are in reverse order to each other, the sum of their eigenvalue products yields the minimum value for $\text{tr}(\boldsymbol{BW_q})$ in (\ref{trineq}).

\subsubsection{Proof for Theorem \ref{theorem3}}
In the eigenvalue decomposition of $\boldsymbol{W_q^\star}$  as $\boldsymbol{W_q} = \boldsymbol{U_q \Lambda_q U_q^\ast}$, the diagonal matrix $\boldsymbol{\Lambda_q} \in \mathbb{R}^{N \times N}$ has power allocated across $K$ diagonal elements and the remaining eigenvalues for the $N-K$ beams is set to zero. Based on Theorem (\ref{theorem1}), equation (\ref{WP}b) can be rewritten as
\begin{equation}
    \text{tr}(\boldsymbol{h}_i^\ast  \boldsymbol{U_q \Lambda_q U_q^\ast} \boldsymbol{h}_i) = \pi_i
\end{equation}
where $\pi_i = \frac{e_i}{\xi T_c} \ \forall i = 1...K$. We define the row vector, $\boldsymbol{q}_i^\ast = \boldsymbol{h}_i^\ast  \boldsymbol{U_q} = \boldsymbol{h}_i^\ast  \boldsymbol{U_B}$. Then 
\begin{equation}
    \text{tr}(\boldsymbol{q_i^\ast \Lambda_q q_i}) = \pi_i
\end{equation} 
Define row vector $\boldsymbol{a_i}^\ast = \textbf{diag}(\boldsymbol{q_i q_i^\ast}) \text{ for } i = 1...K$, matrix $\boldsymbol{A} \in \mathbb{R}^{K \times K} = [\boldsymbol{a_1^\ast}...\boldsymbol{a_K^\ast}]$, and vector $\boldsymbol{b} \in \mathbb{R}^{K \times 1} = [\pi_1 ... \pi_K]$. This results in constraint (\ref{P4}c) in (P4). The ordering of $\boldsymbol{\lambda_q}$ needs to be in reverse from $\boldsymbol{\lambda_B}$, that is, in descending order, so as to minimize (\ref{Lsub}) as in (\ref{trineq}) which gives us (\ref{P4}b) in (P4).

\bibliographystyle{./IEEEtran}
\bibliography{./wptbib}

\begin{thebibliography}{10}
\providecommand{\url}[1]{#1}
\csname url@samestyle\endcsname
\providecommand{\newblock}{\relax}
\providecommand{\bibinfo}[2]{#2}
\providecommand{\BIBentrySTDinterwordspacing}{\spaceskip=0pt\relax}
\providecommand{\BIBentryALTinterwordstretchfactor}{4}
\providecommand{\BIBentryALTinterwordspacing}{\spaceskip=\fontdimen2\font plus
\BIBentryALTinterwordstretchfactor\fontdimen3\font minus
  \fontdimen4\font\relax}
\providecommand{\BIBforeignlanguage}[2]{{%
\expandafter\ifx\csname l@#1\endcsname\relax
\typeout{** WARNING: IEEEtran.bst: No hyphenation pattern has been}%
\typeout{** loaded for the language `#1'. Using the pattern for}%
\typeout{** the default language instead.}%
\else
\language=\csname l@#1\endcsname
\fi
#2}}
\providecommand{\BIBdecl}{\relax}
\BIBdecl

\bibitem{Krikidis2014}
I.~Krikidis, S.~Timotheou, S.~Nikolaou, G.~Zheng, D.~W.~K. Ng, and R.~Schober,
  ``Simultaneous wireless information and power transfer in modern
  communication systems,'' \emph{IEEE Communications Magazine}, vol.~52,
  no.~11, pp. 104--110, 2014.

\bibitem{Nightingale2018}
J.~Nightingale, P.~Salva-Garcia, J.~M.~A. Calero, and Q.~Wang, ``{5G-QoE: QoE
  Modelling for Ultra-HD Video Streaming in 5G Networks},'' \emph{IEEE
  Transactions on Broadcasting}, vol.~64, no.~2, pp. 621--634, June 2018.

\bibitem{MEC2014}
ETSI, ``{Mobile-Edge Computing - Introductory Technical White Paper},'' Huawei,
  IBM, Intel, Nokia, DOCOMO, Vodafone, Tech. Rep., 2014.

\bibitem{MEC2018}
------, ``{ETSI White Paper No. 28: MEC in 5G networks},'' ETSI, Tech. Rep.,
  2018.

\bibitem{Doppler2018}
M.~S. Elbamby, C.~Perfecto, M.~Bennis, and K.~Doppler, ``{Toward Low-Latency
  and Ultra-Reliable Virtual Reality},'' \emph{IEEE Network}, vol.~32, no.~2,
  pp. 78--84, March 2018.

\bibitem{Cau2016}
E.~Cau, M.~Corici, P.~Bellavista, L.~Foschini, G.~Carella, A.~Edmonds, and
  T.~M. Bohnert, ``{Efficient Exploitation of Mobile Edge Computing for
  Virtualized 5G in EPC Architectures},'' in \emph{{2016 4th IEEE International
  Conference on Mobile Cloud Computing, Services, and Engineering
  (MobileCloud)}}, March 2016, pp. 100--109.

\bibitem{Intel2019}
D.~Sabella, A.~Alleman, E.~Liao, M.~Filippou, Z.~Ding, L.~G. Baltar,
  S.~Srikanteswara, K.~Bhuyan, O.~Oyman, G.~Schatzberg, N.~Oliver, N.~Smith,
  S.~D. Mishra, and P.~Thakkar, ``{Edge Computing: from standard to actual
  infrastructure deployment and software development},'' White Paper, Intel
  Corporation, Tech. Rep., 2019, revision 1.0.

\bibitem{Semico2016}
{Semico Research Corporation}, ``{Energy Harvesting: The Next Billion Dollar
  Market for Semiconductors},'' Semico Research Corporation, Tech. Rep., 2016.

\bibitem{Airfuel}
A.~Alliance, ``{Airfuel RF},''
  \url{"https://airfuel.org/wireless-power/rf-charging/"}, accessed:
  2020-04-05.

\bibitem{Ossia}
Ossia, ``{Cota: Real Wireless Power},'' \url{http://www.ossia.com/cota/},
  accessed: 2018-09-22.

\bibitem{Bruno2019}
J.~Kim, B.~Clerckx, and P.~D. Mitcheson, ``{Signal and System Design for
  Wireless Power Transfer : Prototype, Experiment and Validation},'' 2019.

\bibitem{Rui2019}
L.~{Xie}, J.~{Xu}, and R.~{Zhang}, ``{Throughput Maximization for UAV-Enabled
  Wireless Powered Communication Networks},'' \emph{IEEE Internet of Things
  Journal}, vol.~6, no.~2, pp. 1690--1703, 2019.

\bibitem{Barbar2015}
S.~Sardellitti, G.~Scutari, and S.~Barbarossa, ``{Joint Optimization of Radio
  and Computational Resources for Multicell Mobile-Edge Computing},''
  \emph{IEEE Transactions on Signal and Information Processing over Networks},
  vol.~1, no.~2, pp. 89--103, June 2015.

\bibitem{Hu2018}
X.~Hu, K.~Wong, and K.~Yang, ``{Wireless Powered Cooperation-Assisted Mobile
  Edge Computing},'' \emph{IEEE Transactions on Wireless Communications},
  vol.~17, no.~4, pp. 2375--2388, April 2018.

\bibitem{Chu2018}
F.~Zhou, Y.~Wu, H.~Sun, and Z.~Chu, ``{UAV-Enabled Mobile Edge Computing:
  Offloading Optimization and Trajectory Design},'' in \emph{{2018 IEEE
  International Conference on Communications (ICC)}}, May 2018, pp. 1--6.

\bibitem{Ji2018}
Y.~Zhao, V.~C.~M. Leung, H.~Gao, Z.~Chen, and H.~Ji, ``{Uplink Resource
  Allocation in Mobile Edge Computing-Based Heterogeneous Networks with
  Multi-Band RF Energy Harvesting},'' in \emph{2018 IEEE International
  Conference on Communications (ICC)}, May 2018, pp. 1--6.

\bibitem{Chae2016}
C.~You, K.~Huang, and H.~Chae, ``{Energy Efficient Mobile Cloud Computing
  Powered by Wireless Energy Transfer},'' \emph{IEEE Journal on Selected Areas
  in Communications}, vol.~34, no.~5, pp. 1757--1771, May 2016.

\bibitem{Wang2018}
F.~Wang, J.~Xu, X.~Wang, and S.~Cui, ``{Joint Offloading and Computing
  Optimization in Wireless Powered Mobile-Edge Computing Systems},'' \emph{IEEE
  Transactions on Wireless Communications}, vol.~17, no.~3, pp. 1784--1797,
  March 2018.

\bibitem{Malik2020}
R.~Malik and M.~Vu, ``{Energy-Efficient Offloading in Delay-Constrained Massive
  MIMO Enabled Edge Network Using Data Partitioning},'' 2020.

\bibitem{Scutari2015}
S.~Sardellitti, G.~Scutari, and S.~Barbarossa, ``{Joint Optimization of Radio
  and Computational Resources for Multicell Mobile-Edge Computing},''
  \emph{IEEE Transactions on Signal and Information Processing over Networks},
  vol.~1, no.~2, pp. 89--103, June 2015.

\bibitem{You2017}
C.~You, K.~Huang, H.~Chae, and B.~Kim, ``{Energy-Efficient Resource Allocation
  for Mobile-Edge Computation Offloading},'' \emph{IEEE Transactions on
  Wireless Communications}, vol.~16, no.~3, pp. 1397--1411, March 2017.

\bibitem{Guo2018}
H.~Guo, J.~Liu, and J.~Zhang, ``{Efficient Computation Offloading for
  Multi-Access Edge Computing in 5G HetNets},'' in \emph{2018 IEEE
  International Conference on Communications (ICC)}, May 2018, pp. 1--6.

\bibitem{Leng2016}
K.~Zhang, Y.~Mao, S.~Leng, Q.~Zhao, L.~Li, X.~Peng, L.~Pan, S.~Maharjan, and
  Y.~Zhang, ``{Energy-Efficient Offloading for Mobile Edge Computing in 5G
  Heterogeneous Networks},'' \emph{IEEE Access}, vol.~4, pp. 5896--5907, 2016.

\bibitem{Letaief2017}
X.~Zhang, Y.~Mao, J.~Zhang, and K.~B. Letaief, ``{Multi-objective resource
  allocation for mobile edge computing systems},'' in \emph{2017 IEEE 28th
  Annual International Symposium on Personal, Indoor, and Mobile Radio
  Communications (PIMRC)}, Oct 2017, pp. 1--5.

\bibitem{Sengul2017}
X.~Lyu, H.~Tian, C.~Sengul, and P.~Zhang, ``{Multiuser Joint Task Offloading
  and Resource Optimization in Proximate Clouds},'' \emph{IEEE Transactions on
  Vehicular Technology}, vol.~66, no.~4, pp. 3435--3447, April 2017.

\bibitem{Bi2018}
S.~Bi and Y.~J. Zhang, ``{Computation Rate Maximization for Wireless Powered
  Mobile-Edge Computing With Binary Computation Offloading},'' \emph{IEEE
  Transactions on Wireless Communications}, vol.~17, no.~6, pp. 4177--4190,
  June 2018.

\bibitem{Mao2017}
Y.~Mao, C.~You, J.~Zhang, K.~Huang, and K.~B. Letaief, ``{A Survey on Mobile
  Edge Computing: The Communication Perspective},'' \emph{IEEE Communications
  Surveys Tutorials}, vol.~19, no.~4, pp. 2322--2358, Fourthquarter 2017.

\bibitem{Ngo2017}
H.~Q. Ngo and E.~G. Larsson, ``{No Downlink Pilots Are Needed in TDD Massive
  MIMO},'' \emph{IEEE Transactions on Wireless Communications}, vol.~16, no.~5,
  pp. 2921--2935, May 2017.

\bibitem{Marzetta2016}
T.~L. Marzetta, E.~G. Larsson, H.~Yang, and H.~Q. Ngo, \emph{{Fundamentals of
  Massive MIMO}}.\hskip 1em plus 0.5em minus 0.4em\relax {Cambridge University
  Press}, 2016.

\bibitem{Taleb2017}
T.~Taleb, K.~Samdanis, B.~Mada, H.~Flinck, S.~Dutta, and D.~Sabella, ``{On
  Multi-Access Edge Computing: A Survey of the Emerging 5G Network Edge Cloud
  Architecture and Orchestration},'' \emph{IEEE Communications Surveys
  Tutorials}, vol.~19, no.~3, pp. 1657--1681, thirdquarter 2017.

\bibitem{Boyd2004}
S.~Boyd and L.~Vandenberghe, \emph{{Convex optimization}}.\hskip 1em plus 0.5em
  minus 0.4em\relax Cambridge University Press, 2004.

\bibitem{Boyd2003}
S.~Boyd, L.~Xiao, and A.~Mutapcic, ``{Subgradient Methods},'' \emph{lecture
  notes of EE392o, Stanford University, Autumn Quarter}, vol. 2004, pp.
  2004--2005, 2003.

\bibitem{Intel2017}
D.~{Robbins}, C.~{Cholas}, M.~{Brennan}, and K.~{Critchley}, ``{Augmented and
  Virtual Reality for Service Providers},'' Immersive Media Business Brief,
  Intel Corporation, Tech. Rep., 2017, revision 1.0.

\bibitem{Zeng2017}
Y.~Zeng, B.~Clerckx, and R.~Zhang, ``{Communications and signals design for
  wireless power transmission},'' \emph{IEEE Transactions on Communications},
  vol.~65, no.~5, pp. 2264--2290, 2017.

\bibitem{Olkin1979}
A.~W. Marshall, I.~Olkin, and B.~C. Arnold, \emph{{Inequalities: Theory of
  Majorization and its Applications}}.\hskip 1em plus 0.5em minus 0.4em\relax
  Springer, 1979, vol. 143.

\end{thebibliography}
\end{document}